\documentclass[10pt]{amsart}
\usepackage{multirow}
\usepackage{booktabs}
\usepackage{amsfonts}
\usepackage{amssymb}
\usepackage{amsthm}
\usepackage{color}
\usepackage{latexsym}
\usepackage{bbm}
\usepackage{enumerate}
\usepackage{graphicx}
\usepackage{graphics}
\usepackage{float}
\usepackage{sidecap}
\usepackage{wrapfig}
\usepackage{subfig}
\usepackage{graphicx}
\usepackage{amsmath}
\usepackage{setspace}
\usepackage[tmargin=2.0cm, bmargin=2.0cm, lmargin=2.0cm, rmargin=2.0cm]{geometry}
\numberwithin{equation}{section}

\def\Wp{W^{(q)\prime}}

\def\P{{\mathbb P}}
\def\E{{\mathbb E}}

  \def\i{\infty}

\newcommand*\diff{\mathop{}\!\mathrm{d}}
\def\nn{\nonumber}

\newtheorem{Thm}{Theorem}
\newtheorem{Lem}{Lemma}
\newtheorem{Cor}{Corollary}
\newtheorem{Prop}{Proposition}

\theoremstyle{definition}
\newtheorem{Rem}{Remark}

\DeclareFontFamily{U}{mathx}{\hyphenchar\font45}
\DeclareFontShape{U}{mathx}{m}{n}{
      <5> <6> <7> <8> <9> <10>
      <10.95> <12> <14.4> <17.28> <20.74> <24.88>
      mathx10
      }{}
\DeclareSymbolFont{mathx}{U}{mathx}{m}{n}
\DeclareFontSubstitution{U}{mathx}{m}{n}
\DeclareMathAccent{\widecheck}{0}{mathx}{"71}
\DeclareMathAccent{\wideparen}{0}{mathx}{"75}


\title{Double continuation regions for American and Swing options with negative discount rate in L\'evy models}

\author{Marzia De Donno}
\address{Department of Economics and Management, University of Parma,  via Kennedy 6, 43125  Parma, Italy}
\email{marzia.dedonno@unipr.it}

\author{Zbigniew Palmowski}
\address{Faculty of Pure and Applied Mathematics, Wroc\l aw University of Science and Technology, ul. Wyb. Wyspia\'nskiego 27, 50-370 Wroc\l aw, Poland}
\email{zbigniew.palmowski@gmail.com}

\author{Joanna Tumilewicz}
\address{Mathematical Institute, University of Wroc\l aw, pl. Grunwaldzki 2/4, 50-384 Wroc\l aw, Poland}
\email{joanna.tumilewicz@gmail.com}

\thanks{Supported by the National Science Centre under the grants 2016/23/B/HS4/00566 (2017-2020) and 2016/23/N/ST1/01189 (2017-2019).}
\date{\today}

\keywords{}

\begin{document}

\begin{abstract}
In this paper we study perpetual American call and put options  in an exponential L\'evy model. We consider a negative effective discount rate which arises in a number of financial applications
including stock loans and real options, where the strike price can potentially grow at a higher rate than
the original discount factor. We show that in this case  a double continuation region arises and we  identify the two critical prices. We also generalize this result to multiple stopping problems of Swing type, that is, when
successive exercise opportunities are separated by i.i.d. random
refraction times. We conduct an extensive numerical analysis for the Black-Scholes model and
the jump-diffusion model with exponentially distributed jumps.

\vspace{3mm}

\noindent {\sc Keywords.} American option $\star$ negative rate $\star$ optimal stopping $\star$ L\'evy process

\end{abstract}

\maketitle

\pagestyle{myheadings} \markboth{\sc M.\ De\ Donno -- Z.\ Palmowski
--- J.\ Tumilewicz} {\sc The double continuation region for the American options}

\vspace{1.8cm}

\tableofcontents

\newpage

\section{Introduction}\label{sec:intro}
In this paper, we study the optimal stopping problem
\begin{equation}\label{mainprice} \sup_{\tau \in \mathcal{T}} \E\left[e^{-q\tau}G(\widetilde{S}_\tau)\right]
\end{equation}
for the asset price
\begin{equation*}\label{mainprice2}
\widetilde{S}_t=e^{\widetilde{X}_t},
\end{equation*}
where $\widetilde{X}$  is an asymmetric L\'evy process, $G(x) = (K-x)^+$ or $G(x) = (x-K)^+$, $\mathcal{T}$ is a family of stopping times with respect to the right-continuous augmentation of the
natural filtration of $\tilde{X}$ satisfying usual conditions,
but, in contrast to the standard assumption, we take $q<0$.
When $\tau=+\infty$, corresponding to the case where early stopping is suboptimal,
we use the convention that $e^{-q\tau}\cdot 0=0$.
In a financial market context, the solution of this problem is the price of a perpetual American option, in {\it a L\'evy market} with {\it a negative interest rate}.  Our analysis is then extended to a more general type of American options, called \emph{Swing options}, which allow for  multiple exercise opportunities separated by i.i.d. random refraction times.

The non-standard assumption of negative discount rate has received some attention in the last few years because several decision-making problems in finance fall under that umbrella.

One of the main example concerns {\it gold loans}.
After the financial crisis, 
collateralized borrowing has increased. Treasury bonds and stocks
are the collateral usually accepted by financial institutions, but gold is increasingly being used around the
world; see \cite{7}. Major Indian non-banking financial companies, like Muthoot Finance and Manappuram Finance, have
been quite active in lending against gold collateral. As 
\cite{CS} report in their survey of
the Indian gold loan market, gold loans tend to have short maturities and rather high spreads (borrowing rate
minus risk-free rate), even if significantly lower than without collateral.
The prepayment option is common,
permitting the redemption of the gold at any time before maturity.
In a gold loan, a borrower receives at time $0$ (the date of contract inception) a loan amount $K>0$ using
one mass unit (one troy ounce, say) of gold as collateral, which must be physically delivered to a lender. This
amount grows at the constant borrowing rate
$\gamma$,  stipulated
in the contract (and usually higher than the risk-free rate $r$). The cost of reimbursing the loan at time $t$ is thus given by $Ke^{\gamma t}$. When paying back the
loan, the borrower redeems the gold and the contract is terminated. In our model, we assume that the costs of storing and
insuring gold holdings are $\bar{S}_tc>0$ per unit of time, where $\bar{S}_t$ is the gold spot price, which dynamics is governed by an exponential L\'evy process; specifically,
$$\bar{S}_t=e^{\bar{X}_t}$$ where  $\bar{X}_t$ is  a L\'evy process, that is a c\`{a}dl\`{a}g process
whose increments in non--overlapping
time intervals are independent and stationary.
The dynamics of $\bar{S}_t$ under the risk-neutral measure is such that
the discounted price $e^{-rt}\bar{S}_t$ is a martingale, that is
$\E \bar{S}_t = e^{rt}\E \bar{S}_0$; see e.g. \cite{Hull}.
If, as a first step in the analysis, we  assume  that redemption can be done at any time (perpetual contract),  the value of the contract, with infinite maturity date,  at time $0$ is
\begin{equation*}\label{pricegold}V_c(s):=\sup_{\tau \in \mathcal{T}} \E\left[e^{-r\tau}\left(\bar{S}_te^{ct}-Ke^{\gamma \tau}\right)^+|\ \bar{S}_0=s\right]=
\sup_{\tau \in \mathcal{T}} \E\left[\left.e^{-q\tau}\left(\widetilde{S}_t-K\right)^+\right|\ \widetilde{S}_0=s\right]\end{equation*}
for $q=r-\gamma$, $$\widetilde{S}_t=\bar{S}_t e^{-\gamma t +ct}=e^{\widetilde{X}_t}$$ and $\widetilde{X}_t=\bar{X}_t-\gamma t+ct$ being a L\'evy process satisfying
\begin{equation}\label{mainprice3}
\E e^{\widetilde{X}_t}=e^{(r-\gamma+c)t}\E e^{\widetilde{X}_0}=e^{(q+c)t}\E e^{\widetilde{X}_0}.
\end{equation}
Thus this price equals the initial value of a perpetual American call option, that is
\eqref{mainprice} with $G(x)=(x-K)^+$, calculated with respect to the martingale measure, hence satisfying \eqref{mainprice3}.
Data from the 
\cite{10} show that the daily log change in the gold spot price,
expressed in Indian rupees, has registered an annualized historical volatility of $21.4$\% over the period from the
3rd of January 1979 to the 5th of May 2013. Average storage/insurance costs are about $2$\% and other parameters can be chosen as follows
$r = 8$\%, $c = 2$\% and $\gamma= 17$\%.
Then, in this case,
we have
\begin{equation*}\label{mainprice4}
q=r-\gamma<0,
\end{equation*} namely, a negative discounting rate.

There are many other financial products where a {\it negative rate} appears.
For example, 
\cite{Xia&Zhou} consider {\it a stock loan}, which is similar to the above gold loan, but here
a stock is used as collateral and can be redeemed at any time if it is convenient for the borrower. Such a contract can be reduced to a perpetual American put option with a possibly negative interest rate, given by the difference between the market interest rate and the loan rate.
Indeed, if an investor lends at time $0$ an amount $K$ with interest rate $\gamma$ and he/she gets as collateral some stock $\bar{S}$ with the price $\bar{S}_0$, then
the value of stock loan is given by
\begin{align}
V_p(s):=V(s):=\sup\limits_{\tau\in\mathcal{T}}\E\left[e^{-r\tau}\left(Ke^{\gamma\tau}-\bar{S}_{\tau}\right)^+|\ \bar{S}_0=s\right],\label{stockloan}
\end{align}
where $r$ is a risk-free interest rate.
The essential difference between the above instrument and a classical American put option is the time-dependent strike price.
We can still rewrite problem \eqref{stockloan} into an American put option form as follows:
\begin{align}
V(s)=&\sup\limits_{\tau\in\mathcal{T}}\E\left[e^{-r\tau}\left(Ke^{\gamma\tau}-\bar{S}_{\tau}\right)^+|\ \bar{S}_0=s\right]=\sup\limits_{\tau\in\mathcal{T}}\E\left[e^{-(r-\gamma)\tau}\left(K-\widetilde{S}_{\tau}\right)^+|\ \widetilde{S}_0=s\right]\nonumber\\
=&\sup\limits_{\tau\in\mathcal{T}}\E\left[e^{-q\tau}\left(K-\widetilde{S}_{\tau}\right)^+|\ \widetilde{S}_0=s\right],\label{Aput}
\end{align}
where $\widetilde{S}_t=e^{-\gamma t}\bar{S}_t=e^{\widetilde{X}_t}$ for $\widetilde{X}_t=\bar{X}_t-\gamma t$ and $q=r-\gamma$ may appear to be negative.

In the context of real options, 
\cite{Battauz, DeDonno} consider {\it an optimal investment problem}:  a firm must decide when to invest in a single project and both the value of the project and the cost of entering it are stochastic.
This problem is reduced to the valuation of an American put option where the underlying is the ratio between the stochastic cost and the present value of the project (cost-to-value ratio) and the strike price is equal to 1. The interest rate is negative if the  growth rate of the project's present value dominates the discount rate.
When the cost of the investment is constant it is always convenient for the firm to wait and early exercise never occurs (see, for instance, \cite{DixitPindick}), thus this case is usually neglected. On the other hand, it turns out that the presence of a stochastic cost makes the problem relevant, because it may imply the existence of a well-defined exercise region.
As a last example where a negative interest rate may appear, we recall the presence nowadays of markets with a domestic non-positive short rate (as the Euro or the Yen denominated market):  in \cite{BattauzQuanto},  American quanto options  (put or call options written on a foreign security) are analyzed in such markets.

In \cite{Xia&Zhou} and   \cite{Battauz, DeDonno,  BattauzQuanto}, the question of the pricing and the identification of the exercise region for American options with  negative interest rate has been extensively studied under the assumption that the price of the underlying evolves according to a geometric Brownian motion.
In particular, 
\cite{Battauz, DeDonno,  BattauzQuanto} show that when the coefficients satisfy some special conditions,  a nonstandard double continuation region appears: exercise is optimally postponed not only when the option is not sufficiently in the money but also when the option is too deep in the money.
They explicitly characterize the value function and the two critical prices  which delimit the exercise region in the perpetual case, and study the properties of the time-dependent boundaries in the finite-maturity case.
A double continuation region can also appear with positive interest rate in the case of capped options. See
\cite{BroadieandDetemple} for the case of a capped option with growing cap, and \cite{DetempleandKitapbayev}
for the case of a capped option with two-level cap.

{\it Our aim} is to extend this analysis to {\it a L\'evy market}. It is well-known that the dynamics of securities in a financial market is described more accurately by processes with jumps.
Indeed, several empirical studies show that the log-prices of stocks have a  heavier left tail than   the normal distribution, on which the seminal Black-Scholes model is founded.
The introduction of jumps in a financial market dates back to 
\cite{Merton}, who added a compound Poisson process to the standard Brownian motion to better describe the dynamics of the logarithm of an asset. Since then, many papers have been written about the use of  general L\'evy processes in the modeling of financial markets and in the pricing of derivatives (see for instance \cite{Cont, Schoutens}).
Among the most recent examples of L\'evy processes used in modeling the evolution of the stock price process we refer to the normal inverse Gaussian model of
\cite{B10}, the hyperbolic
model of 
\cite{B42}, the variance gamma model of
\cite{B80}, the CGMY model of 
\cite{B24}, and the tempered stable process first introduced
by 
\cite{B68} and extended by
\cite{boyarchenkolevendorskii}.

American options in L\'evy markets have been studied in many papers.
\cite{Aase2010} studies perpetual put options and characterizes the continuation region in a jump-diffusion model. In a general L\'evy market, 
\cite{Mordecki} obtains closed formulas for the price of the perpetual call and put options and their critical prices, in terms of the law of the supremum and the infimum of a L\'evy process.  
\cite{boyarchenkolevendorskii}
exploit the Wiener-Hopf factorization and analytical methods to derive closed formulas for a large class of L\'evy processes and find explicit expressions in some special cases.
\cite{AsmussenAvramPistorius} also use the Wiener-Hopf factorization to compute the price of American put options but concentrate on L\'evy processes with two-sided phase-type jumps.
\cite{Erik1} use fluctuation theory to give an alternative proof of Mordecki's result;
see also 
\cite{Erik2}, and 
\cite{AliliKyprianou}, and references therein.

A major technique that has been widely used in the theory of optimal stopping
problems driven by diffusion processes is the free boundary formulation for the value
function and the optimal boundary.  The free boundary formulation consists primarily of a partial differential equation and (among other boundary conditions) the continuous
and smooth pasting conditions used to determine the unknown boundary and
specify the value function; see \cite{Peskir}.
In the context of the L\'evy market,
\cite{Budhiphd} gives sufficient and
necessary conditions for the continuous and smooth pasting conditions  (see also \cite{LambertonMikou}): in particular, the assumption of non zero volatility turns out to be fundamental.
A special case when the diffusion of the underlying process is degenerate has been also considered e.g. in \cite{Alvarez, boyarchenkolevendorskii, Cont, Kyprianou2007},\\
\cite{LambertonMikou}.
Another particular case of L\'evy market with nonzero volatility is the jump-diffusion market.
For this scenario, both the  methods of  variational inequalities and viscosity solutions of the boundary value problem have been used to study  American options; see
e.g. \cite{LambertonMikou, Pham, PhamViscosity}. We also refer to section 7 in 
\cite{Detemple2014},  for a general survey on American option in  the jump-diffusion model and other references.

In this paper {\it we propose a new approach to the pricing of  American options}.
To illustrate it, let us focus on  the put option. The price of the perpetual option
is a convex function of  the initial value $\widetilde{S}_0$ of the underlying  $\widetilde{S}_t$, and,  when $\widetilde{S}_0$ tends to 0, the price goes to $K$ if the interest rate is non-negative, otherwise goes to infinity. Moreover it always dominates the payoff function and coincides with it on the exercise region.
Therefore, if it is optimal to early exercise,  we can derive either a single continuation region (when the interest rate is non-negative, the  stopping region is a half-line)
 or a double continuation region (when the interest rate is negative, hence the stopping region is an interval).
Our approach consists in  considering all possible pairs of critical prices which may delimit the exercise region and maximize the value function over them: so doing, we
derive necessary and sufficient conditions for the existence of a double continuation region.

To compute the {\it two critical prices} which delimit the stopping region, we need to calculate {\it the Laplace transform of the entrance time of a closed interval}.
This is {\it a non-standard problem in the context of L\'evy processes} because they may jump over the interval without entering it. Still, we manage to solve this problem
for one-sided L\'evy processes, that is, for L\'evy processes either without positive jumps (so-called spectrally negative L\'evy processes)
or without negative jumps (so-called spectrally positive L\'evy processes).

The assumption of one-sided jumps in a L\'evy market  is quite common in financial modeling.
It  appears, for example, in
\cite{Erik1}, 
\cite{Erik2}, 
\cite{Budhiphd},
\cite{AliliKyprianou},
\cite{Avrampreprint}, and
\cite{Chan}. Moreover, it can be also found in
\cite{GerberShiu}, who exploit it to compute prices and optimal exercise strategies for the perpetual American put option in a jump-diffusion model, and in
\cite{ChesneyJeanblanc}, who evaluate  currency American options.
Similarly, 
\cite{BarrieuBellamy} analyze an investment problem in a jump-diffusion framework through a real option approach: they formulate it as an American call option problem on the ratio between the project value and the cost, which is assumed to jump only downwards, to represent a market crisis.

A core instrument in the fluctuation theory of completely asymmetric L\'evy processes are the so-called scale functions, which are usually exploited in the study of exit problems. For our entrance problem,  we derive  equations determining both ends of the stopping interval and the price of the American option in terms of these functions. Later, we analyze in details two cases, the Black-Scholes market and the jump-diffusion market with exponential jumps, where explicit solution and an extensive numerical analysis can be conducted.

We underline that the new method presented in this paper does not require {\it the smooth-fit condition}.
Still, {\it we prove that it holds} in our case. The most general conditions for smooth fit follow from
the arguments given in \cite[Thms. 4.1, 5.1 and 5.2]{LambertonMikou}
which are based on the variational inequality approach.
Under the more restrictive assumption that volatility is strictly positive ($\sigma >0$)
we give a new proof of this fact based on the extended version of It\^o's formula for L\'evy processes
that involves a local time on a curve separating two regions to
account for possible jumps over it; see
\cite {Eis},
\cite{Eisenbaum&Kyprianou}, 
\cite{Peskirlocaltime}, 
\cite{Budhiphd}.
Note that this approach is different from the one used in most of the papers, where
this condition is used to determine the unknown boundary and is also used as
a 'rule of thumb'; see 
\cite[p. 49]{Peskir}. It appears that the assumption of regularity of $0$ for $(-\infty,0)$ is crucial here; indeed
\cite{Budhiphd} and 
\cite{LambertonMikou} show that without this assumption the smooth-fit condition is not valid
in an asymmetric L\'evy market for $q\geq 0$.

The above results for American put options can be transformed into results for American call options using the so-called put-call symmetry formula. Another {\it contribution of this paper} is to show
that indeed {\it the put-call symmetry formula holds} in our, L\'evy framework with negative discount rate, case, and a dual relation between the two free boundaries of the call and the put options can be derived.
Our finding supplements 
\cite{FajardoMordecki} and, independently,
\cite{EberleinPapantaleon} who extend
to the L\'evy market the findings by 
\cite{CarrChesney}, under the assumption of a non-negative interest rate.
An analogous result for the negative discounting rate case is  obtained in  \cite{DeDonno} in the Black and Scholes market.
An interesting remark is due here: the symmetry formula shows that for some set of parameters one can
observe the double continuation region also for the American call option. However, there are also some cases when a single continuation region appears, in spite of the assumption of  a negative interest rate $q$. This occurs when the dividend rate $\delta$ is strictly positive, which is the case considered for instance by
\cite{Xia&Zhou}.  In Section \ref{sec:examples}, Tables \ref{Table1} and  \ref{Table2}, we collect all the possible cases for the put and call option in the Black and Scholes market to give a complete picture.

A further result of this paper is the generalization of the above results to {\it multiple stopping problems of Swing type}, that is
when successive exercise opportunities are possible and are separated by i.i.d. random refraction times.
These multiple stopping rules might appear when the same option can be exercised repeatedly in the future, meaning that an
investor can acquire a series of stock loans, or a firm can make an investment sequentially over time.
The features of refraction periods and multiple exercises
also arise in the pricing of Swing options commonly used for energy delivery. For instance, 
\cite{Carmona&Touzi}
formulate the valuation of a Swing put option as an optimal multiple stopping problem, with constant refraction periods,
under the geometric Brownian motion model. In a related study, 
\cite{Zeghal&Mnif} value a perpetual American
Swing put when the underlying L\'evy price process has no negative jumps.
\cite{Leung&Yamazaki&Zhang} consider call Swing options, under the assumption of a negative effective discount rate, and, as in the case of
\cite{Xia&Zhou}, they derive single continuation regions. Again this is a consequence of Assumption 2.1
made in \cite{Leung&Yamazaki&Zhang} that corresponds to the case when $q<0$ and $\delta >0$ in Table \ref{Table2} given in Section \ref{sec:examples}.
Therefore, in contrast to \cite{Leung&Yamazaki&Zhang},
we describe a complementary case of double continuation regions.

The paper is organized as follows.
In Section \ref{sec:model} we introduce basic facts concerning asymmetric, and in particular,
spectrally negative, L\'evy markets.
In Section \ref{sec:valuation} we compute  the price of
the American put option for completely asymmetric log L\'evy prices
and identify possibly double continuation region in this case.
Section \ref{sec:swing} deals with the Swing put options.
Section \ref{sec:putcall} presents a put-call symmetry formula for a L\'evy market with a  negative discounting rate, which allows for the extension of  the previous results to call options and Swing call options.
Finally, in Section \ref{sec:examples} we give an extensive numerical analysis
and in Section \ref{sec:conclusions} we summarize our results and propose possible generalizations.

\bigskip
\section{L\'evy-type financial market}\label{sec:model}


We  model the logarithm of  the underlying asset price by a completely asymmetric L\'evy process, that is
a stationary stochastic process with independent increments, right-continuous paths, with left limits for which all jumps have the same sign. Throughout the paper, we will use the following notation: because we mainly consider a spectrally negative L\'evy process,  that is a process with only downward jumps, we denote it simply with $X_t$; a spectrally positive process will be denoted by  $\widehat{X}_t$ (note that $\widehat{X}_t=-X_t$ where $X_t$ is spectrally negative); lastly, we denote by $\widetilde X_t$ a generic asymmetric L\'evy process.
More generally, when using the basic notation we refer to spectrally negative processes; a "hat" means that we are dealing with spectrally positive processes and a "tilde" with general asymmetric processes. So, for instance,
the process
  \begin{equation*}\label{asset1}S_t:=\exp\{X_t\}\end{equation*}
is the geometric L\'evy process, which describes the price of the underlying with $X$ spectrally negative.
 The asset price process corresponding to the spectrally positive case is
$\widehat{S}_t:=\exp\{\widehat{X}_t\}$,
 whereas
$\widetilde{S}_t:=\exp\{\widetilde{X}_t\}$ equals $S_t$ or $\widehat{S}_t$ depending on the scenario of the support of jumps.

We start from defining the  Laplace exponent of a completely asymmetric L\'evy process $\widetilde{X}_t$
\begin{equation*}\label{psi}
\widetilde{\psi} (\phi ):=\log\mathbb{E}[e^{
\phi \widetilde{X}_1}]
\end{equation*}
which is well defined for $\phi\geq 0$ for the spectrally negative case due to the non-positive jumps
and for $\phi\leq 0$ for the spectrally positive case due to the non-negative jumps.
By the L\'evy-Khintchine theorem, for $\mu\in\mathbb{R}$, $\sigma\geq 0$ and a L\'evy measure $\Pi$ defined on $\mathbb{R}\backslash\{0\}$ such that
\begin{align}
\int_{\mathbb{R}\backslash\{0\}}\left(1\wedge u^2\right)\Pi(\diff u)\nn < \infty
\end{align}
we have
\begin{equation}\label{eq:exponent}
\widetilde{\psi}(\phi)=\mu\phi
+\frac{1}{2}\sigma^{2}\phi^{2}+\int_{\mathbb{R}\backslash\{0\}}\big(\mathrm e^{\phi
u}-1-\phi u\mathbbm{1}_{(|u|<1)}\big)\Pi(\diff u).
\end{equation}
It is easily shown that $\widetilde{\psi}$ is
zero at the origin, tends to infinity at infinity and it is strictly
convex. We denote
\begin{equation*}
\widetilde{\Phi}(q):=\sup\{\phi>A:\widetilde{\psi}(\phi)=q\} \quad \textrm{and} \quad
\widetilde{\psi}(\widetilde{\Phi}(q))=q,
\end{equation*}
where $A$
is the last (going from the left hand side) asymptote of $\widetilde \psi(\phi) $.

\noindent {\it We will choose $q$ for which $\widetilde{\Phi}(q)$ is well-defined.}
 An important tool in the theory of spectrally negative L\'evy processes, which we will exploit to express our results, is the so-called scale function.
For a spectrally negative process $X_t$, with Laplace exponent $\psi$,
we define the first scale function as the unique continuous and strictly increasing function $W^{(q)}$ on $[0,\infty )$  with the following Laplace transform:
\begin{align}
\int_0^\infty e^{-\phi u}W^{(q)}(u)\diff u=\frac{1}{\psi (\phi )-q}\qquad\textrm{for }\phi>\Phi(q),\nonumber
\end{align}
where $\psi$ is a Laplace exponent of $X_t$.
The classical definition for scale function $W^{(q)}$ is given for $q\geq 0$. Note that this definition may be extended to $\mathbb{C}$ (see
\cite[Lemma 8.3]{KIntr}).
In particular, we can consider the case of $q<0$.
The function $W^{(q)}$ has left and right derivatives
on $(0,\infty)$.
Throughout
the paper we assume that X has unbounded variation or the jump measure is absolutely continuous
with respect to the Lebesgue measure.
In
that case,
\begin{align}
W^{(q)}\in \mathcal{C}^1(\mathbb{R}_+);\label{sigma0}
\end{align}
see 
\cite[Lem. 2.4, p. 117]{kyprianou}.


From now on, we further assume that the underlying process $\widetilde{S}_t$ is calculated under the martingale measure $\P$ (see \cite[Prop. 9.9]{Cont}).
Under this measure the discounted price process is a martingale and by definition of the Laplace exponent the process $\exp\{\phi \widetilde{X}_t-\widetilde{\psi}(\phi)t\}$
is also a martingale. Thus, under $\P$, we have
\begin{align}
\widetilde{\psi}(1)=q-\delta\label{esschertransform}
\end{align}
for $\widetilde{\psi}$ given in \eqref{eq:exponent}, $q$ being a discounting rate and $\delta$ being a dividend payment intensity or, if negative, borrowing rate or just the
costs of storing depending on applications.

Lastly, because we will use identities related to the first passage times above and below some level, we give the following definitions:
\begin{align}
\tau^+_a:=\inf\{t\geq 0 : X_t\geq a\},\qquad
\tau^-_a:=\inf\{t\geq 0 : X_t< a\}.\label{exittimes}
\end{align}
We will denote by  $\mathcal{T}$  the family of all stopping times  with respect to the right-continuous augmentation $\{\mathcal{F}_t\}_{\{t\geq 0\}}$ of the natural filtration of $X_t$ (or $\widehat{X}_t$
depending on the analyzed case) satisfying the usual conditions.

We conclude this preliminary section introducing the following notational convention:
\begin{align*}\mathbb{E}[\cdot; A]:=\mathbb{E}[\cdot\; \mathbbm{1}_{\{A\}}] \qquad \qquad \mbox{ for any event } A;\end{align*}
\begin{align*}\mathbb{P}_{(x)}\left[\cdot\right]:=\mathbb{P}\left[\cdot|X_0=x\right]=\mathbb{P}\left[\cdot|S_0=e^x\right];
\qquad
\mbox{ with } \mathbb{P}=\mathbb{P}_{(0)};
\end{align*}
\begin{align*}\mathbb{P}_{s}\left[\cdot\right]:=\mathbb{P}\left[\cdot|S_0=s\right];\end{align*}
$\E$, $\E_{(x)}$, $\E_s$ are the corresponding expectations to the above measures.

\bigskip
\section{American put option}\label{sec:valuation}

We start from the evaluation of the perpetual American put option with  negative discounting rate for both  the spectrally negative and spectrally positive cases. The problem for the spectrally negative case is defined in \eqref{Aput};
for the spectrally positive case, according to our notation, we will write
\begin{align*}
\widehat{V}_p(s):=\widehat{V}(s):=\sup\limits_{\tau\in\mathcal{T}}\E_s\left[e^{-q\tau}\left(K-\widehat{S}_{\tau}\right)^+\right].
\end{align*}

To make the problem of pricing  well-defined we assume throughout this paper that
\begin{align}
V(s)<\infty\qquad\text{and}\qquad \widehat{V}(s)<\infty.\label{condition1}
\end{align}
The next lemma gives a sufficient condition for \eqref{condition1} to hold true.
\begin{Lem}\label{lem:condition}
If $\E\left[X_1\right]$ is defined and valued in $(0,\infty)$, then the first condition of \eqref{condition1} is satisfied.
If $\E\left[X_1\right]$ is defined and valued in $(-\infty, 0)$, then the second condition of \eqref{condition1} is satisfied.
\end{Lem}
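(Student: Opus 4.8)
The plan is to treat both assertions simultaneously. Write $Y$ for $X$ in the first case and for $\widehat{X}=-X$ in the second, and let $\psi$ and $\Phi(q)$ denote the associated Laplace exponent and the root $\sup\{\phi>A:\psi(\phi)=q\}$ (so $\Phi=\Phi(q)$ in the spectrally negative case and $\widehat{\Phi}(q)$ in the spectrally positive one). In either case the hypothesis gives $\E[Y_1]\in(0,\infty)$: indeed $\E[\widehat{X}_1]=-\E[X_1]>0$ when $\E[X_1]<0$. By the strong law of large numbers for Lévy processes $Y_t/t\to\E[Y_1]>0$, so $Y_t\to+\infty$ almost surely; hence the underlying price $e^{Y_t}$ tends to $\infty$, the put payoff is eventually zero, and on $\{\tau=\infty\}$ the discounted payoff vanishes by the convention $e^{-q\tau}\cdot 0=0$. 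Since $(K-e^{Y_\tau})^{+}\le K\,\mathbbm{1}_{\{Y_\tau<\log K\}}$, it suffices to bound, uniformly over $\tau\in\mathcal{T}$ and with $c:=\log K$, $x:=\log s$, the quantity
\[
\E_{(x)}\!\left[e^{-q\tau}\mathbbm{1}_{\{Y_\tau<c\}};\,\tau<\infty\right].
\]

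The engine of the proof is an exponential martingale tuned to annihilate the negative discount. Because $\E[Y_1]=\psi'(0)>0$ and $\psi(0)=0$, the strictly convex exponent $\psi$ is decreasing immediately to the left of the origin, so the root of $\psi(\phi)=q<0$ selected by $\Phi(q)$ is negative; throughout I use the standing assumption that $\Phi(q)$ is well defined, which is exactly what guarantees $\E[e^{\Phi(q)Y_1}]=e^{q}<\infty$. Hence
\[
M_t:=e^{\Phi(q)Y_t-qt}
\]
is a genuine nonnegative $\P_{(x)}$-martingale with $M_0=e^{\Phi(q)x}$. Rewriting $e^{-qt}=M_t\,e^{-\Phi(q)Y_t}$ and using $\Phi(q)<0$ together with $Y_\tau<c$, I bound, on $\{\tau<\infty,\,Y_\tau<c\}$,
\[
e^{-q\tau}\mathbbm{1}_{\{Y_\tau<c\}}=M_\tau\,e^{-\Phi(q)Y_\tau}\mathbbm{1}_{\{Y_\tau<c\}}\le e^{-\Phi(q)c}\,M_\tau,
\]
so that, once optional stopping yields $\E_{(x)}[M_\tau;\tau<\infty]\le M_0$, one obtains $V(s)\le K\,e^{-\Phi(q)c}e^{\Phi(q)x}=K^{1-\Phi(q)}s^{\Phi(q)}<\infty$. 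The identical computation carried out for $\widehat{X}$ with $\widehat{\psi}$ and $\widehat{\Phi}(q)$ gives $\widehat{V}(s)<\infty$, establishing the second statement.

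The main obstacle is the rigorous optional-stopping step at a stopping time that may be infinite. I would apply optional sampling to the bounded times $\tau\wedge n$, giving $\E_{(x)}[M_{\tau\wedge n}]=M_0$, and then pass to the limit by Fatou's lemma; the delicate point is the behaviour of $M_{\tau\wedge n}$ on $\{\tau=\infty\}$. Here strict convexity of $\psi$ forces $\Phi(q)\,\E[Y_1]<\psi(\Phi(q))=q$, so the exponent satisfies $\Phi(q)Y_t-qt\sim(\Phi(q)\E[Y_1]-q)t\to-\infty$ and $M_t\to 0$ almost surely; consequently $\liminf_n M_{\tau\wedge n}=M_\tau\,\mathbbm{1}_{\{\tau<\infty\}}$ and Fatou delivers $\E_{(x)}[M_\tau;\tau<\infty]\le M_0$. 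Combined with the payoff reduction and the convention on $\{\tau=\infty\}$ this closes the estimate. A secondary point to verify is that the exponential-martingale change of variables is legitimately applied on $\mathcal{F}_{\tau\wedge n}$, which is routine once $M$ is known to be a true martingale, i.e. once $\Phi(q)$ is well defined.
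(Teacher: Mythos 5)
Your proof is correct, and it takes a genuinely different route from the paper's. The paper argues structurally: since the put payoff vanishes unless $X_\tau\leq\log K$, every relevant stopping time is dominated by the last exit time $\tau_{\rm last}(\log K)=\sup\{t\geq 0:\ X_t\leq\log K\}$, and since $q<0$ makes $t\mapsto e^{-qt}$ increasing, the discount factor is bounded by $\E_{(x)}e^{-q\tau_{\rm last}(\log K)}$; this is then factorized through the upward first-passage time, using $\E_{(x)}e^{-q\tau^+_{\log K}}=e^{-\Phi(q)(x-\log K)}$, and the finiteness of $\E_{(0)}e^{-q\tau_{\rm last}(0)}$ is quoted from Baurdoux (2009, Thm.~2), valid exactly when $\Phi(q)$ is well-defined. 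You instead dominate the discounted payoff directly by the Wald martingale $M_t=e^{\Phi(q)Y_t-qt}$ built on the negative root $\Phi(q)<0$ (whose negativity you correctly deduce from $\psi(0)=0$, $\psi'(0)=\E[Y_1]>0$ and strict convexity), and you close the estimate by optional sampling at the bounded times $\tau\wedge n$ plus Fatou, identifying $\lim_n M_n=0$ on $\{\tau=\infty\}$ via the strong law of large numbers and the strict-convexity inequality $\Phi(q)\,\E[Y_1]<\psi(\Phi(q))=q$. Your argument is self-contained (no last-passage Laplace transform is needed), treats all stopping times uniformly without any path decomposition, handles the spectrally negative and spectrally positive cases symmetrically through $Y=X$ or $Y=\widehat{X}$, and produces the explicit quantitative bound $V(s)\leq K^{1-\Phi(q)}s^{\Phi(q)}$, which the paper's proof does not state. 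What the paper's route buys in exchange is the structural observation that stopping after $\tau_{\rm last}(\log K)$ is irrelevant, and a derivation that stays within the fluctuation-identity toolkit used throughout the rest of the paper. Both proofs rest on the same two standing hypotheses: the drift condition forcing $Y_t\to\infty$ (which makes the convention $e^{-q\tau}\cdot 0=0$ at $\tau=\infty$ consistent) and the well-definedness of $\Phi(q)$, which is precisely what makes $M$ a true martingale; your proof is a legitimate, and arguably more elementary, substitute.
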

\begin{proof}
We prove only the spectrally negative case. The proof of the spectrally positive case is very similar.
According to \cite[Theorem 7.2]{KIntr}, if $\E\left[X_1\right]>0$, then
\begin{align}
\lim\limits_{t\uparrow\infty}X_t=\infty.\nn
\end{align}
As $S_t$ is geometric L\'evy process, then the above limit implies that $S_t$ also tends to infinity as $t$ goes to infinity.
Thus, the payoff at $\tau=\infty$ equals 0 and it is not optimal to stop there anymore. Now, as $\tau<\tau_{\rm last}(\log K)$ for
$\tau_{\rm last}(z):=\sup\{t\geq 0: X_t\leq z\}$
the discounting factor $\E_{(x)}e^{-q\tau}$ is finite. Indeed, by 
\cite{KIntr} we have
$$\E_{(x)}e^{-q\tau}\leq \E_{(x)}e^{-q\tau_{\rm last}(\log K)}\leq  \E_{(x)}e^{-q\tau_{\log K}^+}\E_{(0)}e^{-q\tau_{\rm last}(0)}
=e^{-\Phi(q) (x-\log K)}\E_{(0)}e^{-q\tau_{\rm last}(0)}.$$
Now, by 
\cite[Thm 2]{Eriklast}
we know that $\E_{(0)}e^{-q\tau_{\rm last}(0)}$ is finite as long as $\Phi(q)$ is well-defined which we assumed in Section \ref{sec:model}.
This observation together with the boundness of the payoff function $(K-s)^+$ give the condition \eqref{condition1}.
\end{proof}

Following \cite{Battauz} which deals with the Black-Scholes model, that is when $\widetilde{X}_t $ is an arithmetic Brownian motion,
we will show that in the completely asymmetric L\'evy  market the optimal stopping rule $\tau\in \mathcal{T}$ has the following interval-type form:
\begin{align}
\tau_{l,u}:=\inf\{t\geq 0:\ \widetilde{X}_t\in[l,u]\}=\inf\{t\geq 0:\ \widetilde{S}_t\in[\mathbf{l},\mathbf{u}]\},\label{tau}
\end{align}
for some $\mathbf{l}:=e^l$, $\mathbf{u}:=e^u$ and $\mathbf{l}\leq \mathbf{u}\leq K$, namely it is an \emph{entrance time}.
This means that an investor should exercise the option whenever the price process $S_t$ enters the interval $[\mathbf{l}, \mathbf{u}]$.
One can give heuristic arguments for the above choice of the optimal stopping time.
It is never optimal to exercise early if $\widetilde{S}_t\geq K$ because the payoff function is then equal to 0. Thus one would wait until the process falls below some level $\mathbf{u}\leq K$. On the other side, if $q<0$ then for $S_0=0$ the value of the American put option dominates the payoff function, because, when $q<0$,
$\sup\limits_{\tau \in \mathcal{T}}\E_0\left[e^{-q\tau}(K-0)^+\right]=K\sup\limits_{\tau \in \mathcal{T}}\E_0\left[e^{-q\tau}\right]>K$. This means that the price of the underlying asset cannot be too small. Thus, if it is optimal to exercise early, then the stopping region is $[\mathbf{l}^*, \mathbf{u}^*]$ for some optimal levels $\mathbf{l}^*$ and $\mathbf{u}^*$.

Formally, the structure \eqref{tau} of the stopping region follows
from the following fact which can be of own interest.

\begin{Lem}\label{newlemma}
The value function is non-increasing, convex (and hence continuous) on the open set $(0,+\infty)$ and the optimal stopping rule has the form  \eqref{tau}.
\end{Lem}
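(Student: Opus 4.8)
The plan is to obtain the two shape properties of $V$ by a pathwise argument and then read off the interval structure of the exercise region from convexity, exploiting that the payoff is \emph{affine} on the relevant range. First, by spatial homogeneity of the L\'evy process $\widetilde X$ in the log variable, I would rewrite, for every $s>0$,
\[
V(s)=\sup_{\tau\in\mathcal T}\E\bigl[e^{-q\tau}\,(K-s\,\widetilde S_\tau)^+\bigr],
\]
where under $\P=\P_{(0)}$ one has $\widetilde S_0=1$ and $\tau$ ranges over the (starting-point independent) family $\mathcal T$. For each fixed $\omega$ and each fixed $\tau$, the map $s\mapsto e^{-q\tau}(K-s\,\widetilde S_\tau)^+$ is non-increasing (since $\widetilde S_\tau>0$) and convex (a nonnegative multiple of the convex map $s\mapsto(K-as)^+$ with $a=\widetilde S_\tau\ge 0$). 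Taking expectations preserves both properties, and a supremum of convex, non-increasing functions is again convex and non-increasing. Hence $V$ is non-increasing and convex on $(0,\infty)$; since $V<\infty$ by \eqref{condition1}, convexity on an open interval yields continuity.

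Writing $G(s)=(K-s)^+$ for the payoff and $D:=\{s>0:V(s)=G(s)\}$ for the exercise region, I note $V\ge G$ (take $\tau=0$). For $s\ge K$ we have $G(s)=0$, while $V(s)>0$ because with positive probability $\widetilde S$ enters $(0,K)$ and, as $q<0$, the factor $e^{-q\tau}\ge1$ does not destroy the resulting positive payoff; hence $[K,\infty)\subseteq\{V>G\}$ and $D\subseteq(0,K)$. At the other end, fixing a deterministic time $T$ and letting $s\downarrow0$ gives, by monotone convergence, $\liminf_{s\downarrow0}V(s)\ge e^{-qT}K$; letting $T\to\infty$ and using $q<0$ shows $V(s)\to\infty$ as $s\downarrow0$, whereas $G(s)\to K$. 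Thus a neighbourhood of $0$ lies in $\{V>G\}$, so $D$ is bounded away from $0$.

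The decisive observation is that on $(0,K)$ the payoff $G(s)=K-s$ is \emph{affine}; therefore $H:=V-G$ is convex on $(0,K)$ and satisfies $H\ge0$. The set $D\cap(0,K)=\{s\in(0,K):H(s)=0\}$ is then the sublevel set $\{H\le0\}$ of a convex function, hence convex, i.e.\ an interval. Since $V$ and $G$ are continuous, $D$ is closed, and combined with $D\subseteq(0,K)$ and the separation from $0$ this yields $D=[\mathbf l,\mathbf u]$ with $0<\mathbf l\le\mathbf u\le K$ (the case $D=\emptyset$ corresponds to early exercise never being optimal). Appealing to the general theory of optimal stopping — $V$ is the least superharmonic majorant of $G$, and for a continuous payoff with $V<\infty$ the first entry time $\tau_D=\inf\{t\ge0:\widetilde S_t\in D\}$ into the closed exercise region is optimal — identifies the optimal rule with $\tau_{\mathbf l,\mathbf u}$ of the form \eqref{tau}.

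I expect the main obstacle to be this last, optimal-stopping step rather than the interval structure (which is light once one uses that $G$ is affine on $(0,K)$). The delicate point is verifying that $\tau_D$ itself — and not some strictly larger time, nor the hitting time of the interior — is optimal. This requires the appropriate regularity of the boundary points $\mathbf l,\mathbf u$ for $\widetilde X$; here one uses the standing assumption that $\widetilde X$ has unbounded variation or an absolutely continuous jump measure, ensuring that the relevant points are regular and that $\tau_D$ is optimal, together with the continuity of $V$ and the finiteness granted by Lemma~\ref{lem:condition}.
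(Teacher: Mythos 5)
Your proof is correct, and for the shape properties it takes a genuinely more elementary route than the paper. Where you obtain convexity and monotonicity pathwise --- writing $V(s)=\sup_{\tau\in\mathcal{T}}\E\bigl[e^{-q\tau}(K-s\widetilde S_\tau)^+\bigr]$ with $\widetilde S_0=1$ and noting that $s\mapsto e^{-q\tau}(K-s\widetilde S_\tau)^+$ is convex and non-increasing for each fixed $\tau$ and $\omega$, properties preserved by expectation and supremum --- the paper instead argues that the European price $\E_s\bigl[e^{-qt}(K-\widetilde S_t)^+\bigr]$ is convex (via continuity, mollification and the argument of Bergman et al., using linearity of the log-price in the initial condition) and then invokes Ekstr\"om's finite-maturity convexity result, letting the maturity tend to infinity; your argument is cleaner and avoids the mollification and differentiability caveats. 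For the interval structure the two proofs coincide in essence: the paper notes that the convex function $\widetilde V$, being strictly positive on $[K,\infty)$, can cross the affine payoff $K-s$ at most twice, while you phrase the same fact as convexity of the sublevel set $\{V-G\le 0\}$ of the convex function $V-G$ on $(0,K)$; you additionally prove inside the lemma that the exercise region is bounded away from $0$ (using $q<0$ and deterministic stopping times), a point the paper defers to the discussion immediately following the lemma. Note that both your claim that $V>0$ on $[K,\infty)$ and the paper's analogous inequality implicitly use that a completely asymmetric L\'evy process is non-monotone, so it enters $(0,K)$ with positive probability; this is part of the standard definition, so it is not a gap. Finally, both you and the paper identify the optimal rule with the first entry time into $\{V=G\}$ by appealing to the same general optimal-stopping theory (Peskir--Shiryaev); your closing caveat about boundary regularity and finiteness of $\tau_D$ is an honest acknowledgment of what that citation carries --- the paper glosses it identically, and finiteness of the optimal rule is in any case assumed separately in Theorem \ref{thm:optimal}.
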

\begin{proof}
Because the payoff function of the put option is continuous,   by
\cite[Thm. 2.2, p. 29, Cor. 2.9, p. 46, Rem. 2.10, p. 48 and (2.2.80), p. 49]{Peskir}
the optimal stopping rule is given by
\begin{equation}\label{optstop0}
\tau^*:=\inf\{t\geq 0: \widetilde{V}(\widetilde{S}_t)=(K-\widetilde{S}_t)^+\}.
\end{equation}
To identify its special form,
observe that the European option price $\E_s\left[e^{-qt}(K-S_t)^+\right]$
is strictly convex in the underlying asset $s$.
Indeed, first observe that from its definition this price is continuous. Then from classical mollifying arguments without loss of generality we can
assume that this price is also differentiable. In the last step we follow \cite{Bergmanetal} heavily using fact that log price in our case depends linearly
on the initial position of the process $\widetilde{X}$.
Now, following the proof of 
\cite[Cor. 2.6]{Ekstrom} (taking the maturity tending to infinity at the last step),
we can conclude that $\widetilde{V}(s)$ is also convex. By definition $\widetilde{V}(s)$ is also non-increasing.
Besides it, $\widetilde{V}(s)$ can not smoothly paste to zero for $s>K$. Indeed, by our assumptions
we have $\widetilde{V}(s)>0$ for any $s>K$ because
\[\widetilde{V}(s)\geq \E_s\left[e^{-q\tau_{l,u}}(K-\widetilde{S}_{\tau_{l,u}})^+\right]\geq (K-u) \E_s\left[e^{-q\tau_{l,u}}\right] >0.\]
This implies that the function $\widetilde{V}$ and the payoff function $K-s$ can cross each other at most two times.
In other words, the stopping region has to be an interval or a half-line.
\end{proof}
\begin{Rem}
In this paper we decided to deal only with the infinite time horizon case but part of the results hold true for the finite maturity case as well.
For example, defining
$$V((S(t),t) = \mbox{ess}\sup\limits_{\tau\in\mathcal{T}_{t,T}} \E\left[e^{-q(\tau - t)}\left(K-\widetilde{S}_{\tau}\right)^+|\ \mathcal{F}_t\right] $$
for maturity $T$, where $\mathcal{T}_{t,T}$ is the set of stopping times satisfying   $t \leq \tau \leq T$,
following 
\cite{Ekstrom} one can show that $V(s,t)$ is non increasing and convex with respect  to $s$ (see also \cite[Prop. 2.2]{LambertonMikou}.
As $(K-s)^+ \leq V(s,t) \leq V(s)$  and $V(0,t) = e^{-q(T-t)} K > K $  for negative $q$,  if the infinite maturity option has an exercise  region of the form $[\textbf{l}^*, \textbf{u}^*]$, then $V(s,t)$ has an exercise region (depending on $t$) of the form $[\textbf{l}(t), \textbf{u}(t)]$ with $0\leq \textbf{l}(t) \leq \textbf{l}^* \leq \textbf{u}^* \leq \textbf{u}(t) \leq K$.
\end{Rem}

If, as in our case, $q<0$, then the function $\widetilde{V}$ is strictly greater than $K$ when $s$ is close to 0, therefore the stopping region is necessarily an interval.
To identify the optimal levels $\mathbf{l}^*$ and $\mathbf{u}^*$, we have to calculate the value function
for the stopping rule \eqref{tau}. Note that these arguments hold for any L\'evy process.
The additional assumption of one-sided jumps allows for explicitly computing the value function.
Let
\begin{align*}
v(x,l,u):=\E_{(x)}\left[e^{-q\tau_{l,u}}\left(K-e^{X_{\tau_{l,u}}}\right)^+\right], \qquad \widehat{v}(x,l,u):=\E_{(x)}\left[e^{-q\tau_{l,u}}\left(K-e^{\widehat{X}_{\tau_{l,u}}}\right)^+\right]
\end{align*}
for $$x:=\log s.$$

In the following lemma we  express the function $v$ and $\widehat{v}$ in terms of the scale function $W^{(q)}$
of the process $X$ for the spectrally negative case and of the process $-\widehat{X}$ for the spectrally positive case.
\begin{Lem}\label{v_value}
Let $X_t$ be a spectrally negative L\'evy process. Assume that $\Phi(q)$ is well-defined and $\Phi(q)<0$.

(i) Then the function $v$ is given by
\begin{align}
v(x,l,u)=&\left(K-e^x\right)\mathbbm{1}_{(x\in [l,u])}+\left(K-e^l\right)e^{-\Phi(q)(l-x)}\mathbbm{1}_{(x<l)}\label{eq:v_value}\\
&+\Bigg\{\int_0^\i\int_{(0,\i)} e^{-\Phi(q)(l-u+y)\vee 0}\left(K-e^{l\vee (u-y)}\right)r^{(q)}(x-u,z)\Pi(-z-\diff y)\diff z\nn\\
&\qquad+\left(K-e^u\right)\frac{\sigma^2}{2}\left(\Wp(x-u)-\Phi(q)W^{(q)}(x-u)\right)\Bigg\}\mathbbm{1}_{(x>u)}\nn
\end{align}
where
\begin{align*}
r^{(q)}(x,z)=e^{-\Phi(q)z}W^{(q)}(x)-W^{(q)}(x-z)
\end{align*}
is a resolvent density of the process $X_t$ killed on exiting $[0,\i)$.

(ii) Then for the spectrally positive L\'evy process $\widehat{X}_t=-X_t$ we have
\begin{align*}
\widehat{v}(x,l,u)=&\left(K-e^x\right)\mathbbm{1}_{(x\in [l,u])}+\left(K-e^u\right)e^{-\Phi(q)(x-u)}\mathbbm{1}_{(x>u)}
\\
&+\Bigg\{\int_0^\i\int_{(0,\i)} e^{-\Phi(q)(l-u+y)\vee 0}\left(K-e^{(l+y)\wedge u}\right)r^{(q)}(l-x,z)\Pi(-z-\diff y)\diff z\nn\\
&\qquad+\left(K-e^l\right)\frac{\sigma^2}{2}\left(W^{(q)\prime}(l-x)-\Phi(q)W^{(q)}(l-x)\right)\Bigg\}\mathbbm{1}_{(x<l)}.\nn
\end{align*}
\end{Lem}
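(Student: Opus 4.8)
The plan is to compute $v(\cdot,l,u)$ separately on the three regions $\{x\in[l,u]\}$, $\{x<l\}$ and $\{x>u\}$ determined by the starting point, using the strong Markov property together with the spectral negativity of $X$. On $[l,u]$ the entrance time vanishes, $\tau_{l,u}=0$, and since $e^x\leq e^u\leq K$ the positive part is superfluous and the payoff is simply $K-e^x$, which gives the first term. For $x<l$, the absence of positive jumps forces $X$ to reach the interval by creeping continuously upward across $l$, so that $\tau_{l,u}=\tau^+_l$ and $X_{\tau_{l,u}}=l$ almost surely on $\{\tau_{l,u}<\i\}$. Invoking the classical spectrally negative identity $\E_{(x)}\le[e^{-q\tau^+_l}\ri]=e^{-\Phi(q)(l-x)}$ for $x\leq l$ then yields the second term $(K-e^l)e^{-\Phi(q)(l-x)}$. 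These two cases are routine; the real work is the region $x>u$.

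For $x>u$ I would condition on the first passage strictly below $u$, i.e.\ on $\tau^-_u$, using that the process cannot touch $[l,u]$ before $\tau^-_u$ while it remains above $u$. After shifting by $u$ (so that the barrier sits at $0$ and the starting point at $\xi:=x-u>0$), the position $X_{\tau^-_u}$ decomposes into a \emph{creeping} part, an atom at the level $u$, and a \emph{jumping} part landing strictly below $u$. The creeping contribution is handled by the known identity $\E_{(\xi)}\le[e^{-q\tau^-_0};X_{\tau^-_0}=0\ri]=\frac{\sigma^2}{2}\le(\Wp(\xi)-\Phi(q)\W(\xi)\ri)$, which deposits the process exactly at $u$ with payoff $K-e^u$ and produces the last line of \eqref{eq:v_value}. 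For the jumping part I would invoke the compensation (L\'evy system) formula: up to $\tau^-_0$ the killed process occupies a level $z>0$, i.e.\ $u+z$, with expected discounted occupation density $r^{(q)}(\xi,z)$, and from there a downward jump of size $z+y$ overshoots $u$ by $y>0$ at rate $\Pi(-z-\diff y)$, landing at $u-y$; this yields the double integral $\int_0^\i\int_{(0,\i)}(\cdots)\,r^{(q)}(x-u,z)\,\Pi(-z-\diff y)\,\diff z$.

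The one genuinely delicate point is that a downward overshoot may carry the process below $l$ without the interval ever being entered, so that $\tau^-_u\neq\tau_{l,u}$ in general. Applying the strong Markov property at $\tau^-_u$ once more, when the landing point $u-y$ lies in $[l,u]$ (equivalently $y\leq u-l$) the interval has been entered immediately, the payoff is $K-e^{u-y}$ and there is no further discounting; when $y>u-l$ the process sits below $l$ and must subsequently creep up to $l$, which contributes the extra factor $e^{-\Phi(q)(l-u+y)}$ and the payoff $K-e^l$. Merging the two subcases reproduces exactly the integrand factors $e^{-\Phi(q)(l-u+y)\vee 0}$ and $K-e^{l\vee(u-y)}$, completing part (i). I expect the main technical obstacle to be the verification that $r^{(q)}(\xi,z)=e^{-\Phi(q)z}\W(\xi)-\W(\xi-z)$ is indeed the resolvent density of the process killed on leaving $[0,\i)$, and the correct separation of creeping from jumping in the discounted first-passage law.

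Part (ii) is obtained by running the same argument for the spectrally positive process $\widehat{X}=-X$, whose path geometry is the mirror image of the spectrally negative one: now the \emph{upward} direction is continuous and the downward direction jumps, so the roles of the two boundaries are interchanged. For $x>u$ the process reaches $[l,u]$ only by creeping continuously down to $u$, so $\tau_{l,u}$ is the first passage below $u$, the landing point is exactly $u$, and the spectrally negative up-crossing identity applied to $X=-\widehat{X}$ gives the pure exponential term $(K-e^u)e^{-\Phi(q)(x-u)}$. For $x<l$ the process enters from below either by an upward jump directly into $[l,u]$ or by overshooting above $u$ and then creeping back down to $u$; repeating the compensation-formula computation (with the killed resolvent now evaluated at $l-x$) together with the creeping identity $\frac{\sigma^2}{2}\le(\Wp(l-x)-\Phi(q)\W(l-x)\ri)$ reproduces the double integral and the $\sigma^2/2$ term of $\widehat{v}$, which completes the reflected case.
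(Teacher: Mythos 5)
Your proposal is correct and follows essentially the same route as the paper's proof: the trivial case on $[l,u]$, the creeping-up identity $e^{-\Phi(q)(l-x)}$ for $x<l$, and for $x>u$ the strong Markov decomposition at $\tau^-_u$ into the creeping atom at $u$ (the $\frac{\sigma^2}{2}$ term), jumps landing in $[l,u]$, and jumps overshooting below $l$ followed by creeping back up to $l$ (producing the factor $e^{-\Phi(q)(l-u+y)\vee 0}$ and payoff $K-e^{l\vee(u-y)}$). The only cosmetic difference is that you phrase the discounted first-passage law via the compensation (L\'evy system) formula with the killed resolvent density, whereas the paper simply cites this joint law of $\bigl(\tau^-_u, X_{\tau^-_u}\bigr)$ from Kyprianou's book, which is itself derived exactly that way.
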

\begin{proof}
To prove the first case (i) note that we can consider only $u<\log K$. Indeed, if $u\geq\log K$, then the payoff is zero because $(K-e^{u})^+=0$.
This cannot produce the optimal value since waiting longer gives a nonzero payoff.
With this assumption we may rewrite the value function as follows
\begin{align}
v(x,l,u):=\E_{(x)}\left[e^{-q\tau_{l,u}}\left(K-e^{X_{\tau_{l,u}}}\right)\right].\nn
\end{align}
As the process $X_t$ is spectrally negative, when it starts  below $l$, it enters the interval $[l,u]$ in a continuous way and hence $X_{\tau_{l,u}}=l$. Thus for $x<l$ we get
\begin{align}
v(x,l,u)=\E_{(x)}\left[e^{-q\tau^+_l};\ \tau^+_l<\i\right]\left(K-e^l\right)=e^{-\Phi(q)(l-x)}\left(K-e^l\right).\nn
\end{align}
The first expectation given above is the solution to the so-called one-sided exit problem and is a well known identity for L\'evy processes (see
\cite[Theorem 8.1]{KIntr}).

If $x>u$, then there are two cases: either  the underlying process $X$ enters the interval $[l,u]$ going downward or  the process $X$ jumps from $(u,\i)$ to $(-\i,l)$ and after that enters $[l,u]$, that is, it enters $[l,u]$ creeping upward. Note that it is impossible that $X$ jumps from $[u,\i)$ to $(-\i,l]$ and after that goes back to $[u,\i)$
because it has only downward jumps. We may separate this two cases in the function as follows
\begin{align}\label{eq_v:cases}
v(x,l,u)=\E_{(x)}\left[e^{-q\tau_{l,u}}(K-e^{X_{\tau_{l,u}}});\ \tau^-_u<\tau_l^-\right]+\E_{(x)}\left[e^{-q\tau_{l,u}}(K-e^{X_{\tau_{l,u}}});\ \tau^-_u=\tau_l^-\right]
\end{align}
The first term in the right-hand side equation corresponds to the case when the process $X$  enters the interval $[l,u]$ through jump or by creeping downwards to level $u$. This last case
 is possible when
 $\sigma >0$ (see 
 \cite[Exercise 7.6]{KIntr}).
 Thus
\begin{align}
&\E_{(x)}\left[e^{-q\tau_{l,u}}(K-e^{X_{\tau_{l,u}}});\ \tau^-_u<\tau_l^-\right]=\E_{(x)}\left[e^{-q\tau_u^-}(K-e^{X_{\tau_{u}^-}});\ X_{\tau^-_u}\in [l,u]\right]\nn\\
&\quad = \int_{(l,u)}(K-e^s)\E_{(x)}\left[e^{-q\tau_u^-};\ X_{\tau^-_u}\in \diff s\right]+\left(K-e^u\right)\E_{(x)}\left[e^{-q\tau_u^-};\ X_{\tau^-_u}=u\right]\nn\\
&\quad = \int_{(0,u-l)}\left(K-e^{u-y}\right)\E_{(x-u)}\left[e^{-q\tau_0^-};\ -X_{\tau_0^-}\in\diff y\right]+\left(K-e^u\right)\E_{(x-u)}\left[e^{-q\tau_0^-};\ X_{\tau_0^-}=0\right].\nn
\end{align}
The joint law for the first passage below and the position at this time are given in \cite[Chapter 8.4]{KIntr}
\begin{align}
&\E_{(x)}\left[e^{-q\tau_0^-};\ -X_{\tau^-_0}\in \diff y\right]=\int_0^\infty \left[ e^{-\Phi(q)z}W^{(q)}(x)-W^{(q)}(x-z)\right] \, \Pi(-z-\diff y)\diff z,\label{applying}\\
&\E_{(x)}\left[e^{-q\tau_0^-};\ X_{\tau^-_0}=0\right]=\frac{\sigma^2}{2}\left[W^{\prime (q)}(x)-\Phi(q)W^{(q)}(x)\right]\nn.
\end{align}
The second formula above is well-defined because
by \eqref{sigma0} the derivative of the scale function $W^{\prime(q)}$ exists.

The second term in the right-hand side of equation \eqref{eq_v:cases} refers to the case where the process $X$ firstly jumps from $(u,\i)$ to $(-\i,l)$ and, after that, enters $[l,u]$. Note that when the process jumps below level $l$, it creeps upward  starting from $X_{\tau^-_l}<l$. Hence
\begin{align}
&\E_{(x)}\left[e^{-r\tau_{l,u}}(K-e^{X_{\tau_{l,u}}});\ \tau^-_u=\tau_l^-\right]=\E_{(x)}\left[e^{-r\tau_{l,u}}(K-e^{X_{\tau_{l,u}}});\ X_{\tau^-_u}<l\right]\nn\\
&\quad =\E_x\left[e^{-r\tau_u^-}\E_{(X_{\tau^-_u})}\left[e^{-r\tau_{l,u}}(K-e^{X_{\tau_{l,u}}})\right];\ X_{\tau^-_u}<l\right]\nn\\
&\quad=\int_{(u-l,\i )}\E_{(x-u)}\left[e^{-q\tau_0^-}\E_{(u-y)}\left[e^{-q\tau_{l,u}}\left(K-e^{X_{\tau_{l,u}}}\right)\right]; -X_{\tau_0^-}\in\diff y\right]\nn\\
&\quad =\int_{(u-l,\i)}e^{-\Phi(q)(l-u+y)}\left(K-e^{l}\right)\E_{(x-u)}\left[e^{-q\tau_0^-}; -X_{\tau_0^-}\in\diff y\right].\nn
\end{align}
Applying the joint law for $\left(\tau^-_u,\ X_{\tau^-_u}\right)$ given earlier in \eqref{applying} we get the assertion for $x>u$.

Obviously, when $x\in [l,u]$ then the first moment when process enters $[l,u]$ happens immediately, i.e. $\tau_{l,u}=0$ and the value is $K-e^x$.
This completes the proof of the part (i). The proof of the case (ii) is similar.
\end{proof}

We sum up our findings in the following theorem.
\begin{Thm}\label{thm:optimal}
Assume that the value functions $v(x,l,u)$ and $\widehat{v}(x,l,u)$   calculated in Lemma 3 for spectrally negative and spectrally positive
log asset L\'evy price process respectively admits unique maximizers $\mathbf{l}^*=e^{l^*}$ and $\mathbf{u^*}=e^{u^*}$. Then the stopping set  for the American put option with negative interest rate in a completely asymmetric L\'evy market is not empty and
the optimal stopping rule, which is finite,
is given in (3.2)
for  $\mathbf{l}^*=e^{l^*}$ and $\mathbf{u^*}=e^{u^*}$. The prices of the perpetual American option in the two cases are respectively
$$ V_p(s) = \sup_{l \leq  u  } v(\log s, l, u) = v(\log s, l^*, u^*), \qquad  \qquad \widehat{V}_p(s) = \sup_{ l \leq u  } \widehat{v}(\log s,l,u) = \widehat{v}(\log s,l^*,u^*).
$$

\end{Thm}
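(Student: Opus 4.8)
The plan is to show that the reduction from the optimization over all of $\mathcal{T}$ in \eqref{Aput} to the two-parameter family of entrance times $\tau_{l,u}$ from \eqref{tau} loses nothing, and then to read off the optimal pair from the uniqueness hypothesis. I would carry out the spectrally negative case in detail; the spectrally positive statement then follows by the identical argument after replacing $v$ and the scale function of $X$ by $\widehat{v}$ and the scale function of $-\widehat{X}$, exactly as in Lemma \ref{v_value}.

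First I would establish the identity $V_p(s)=\sup_{l\le u}v(\log s,l,u)$ through two inequalities. The inequality $\sup_{l\le u}v(\log s,l,u)\le V_p(s)$ is immediate, since each $\tau_{l,u}$ belongs to $\mathcal{T}$ and, restricting to $u\le\log K$ as in the proof of Lemma \ref{v_value}, one has $v(\log s,l,u)=\E_s[e^{-q\tau_{l,u}}(K-\widetilde{S}_{\tau_{l,u}})^+]$, a competitor in the supremum defining $V_p(s)$. For the reverse inequality I would appeal to Lemma \ref{newlemma}: there the value function $\widetilde{V}$ is shown to be convex, non-increasing and to dominate the payoff, so that by the results of \cite{Peskir} quoted in that lemma the optimal rule is the first-entrance time \eqref{optstop0} into the stopping region $D=\{s:\widetilde{V}(s)=(K-s)^+\}$, and the two-crossings argument forces $D$ to be an interval $[\mathbf{l}^*,\mathbf{u}^*]$ with $\mathbf{l}^*\le\mathbf{u}^*\le K$---a genuine interval rather than a half-line, because $q<0$ makes $\widetilde{V}>K$ near the origin. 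Hence the optimal rule coincides with $\tau_{l^*,u^*}$, giving $V_p(s)=v(\log s,l^*,u^*)\le\sup_{l\le u}v(\log s,l,u)$ and closing the loop.

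Next I would identify the optimizer. The two inequalities show that the boundaries $(l^*,u^*)$ of $D$ maximize $v(\log s,\cdot,\cdot)$ for every $s$; since $D$ is intrinsic to the problem, these boundaries do not depend on the initial point $s=e^x$. The standing assumption that $v(x,l,u)$ admits a \emph{unique} maximizer then forces the critical pair obtained from the explicit formula of Lemma \ref{v_value} to coincide with $(l^*,u^*)$, whence $V_p(s)=\sup_{l\le u}v(\log s,l,u)=v(\log s,l^*,u^*)$. For non-emptiness, the lower bound $v(\log s,l,u)\ge(K-e^{u})\E_s[e^{-q\tau_{l,u}}]>0$ used in Lemma \ref{newlemma} shows that stopping strictly beats never stopping, which under the convention $e^{-q\infty}\cdot 0=0$ returns $0$; hence $[\mathbf{l}^*,\mathbf{u}^*]\neq\emptyset$. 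Finiteness of the value is ensured by \eqref{condition1} via Lemma \ref{lem:condition}; and although $\tau_{l^*,u^*}$ may equal $+\infty$ when $X$ drifts away without entering $[\mathbf{l}^*,\mathbf{u}^*]$, on that event the discounted payoff vanishes by the same convention, so $\tau_{l^*,u^*}$ attains the optimal value.

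I expect the real obstacle to be the reverse inequality---ruling out that some non-interval rule outperforms the best entrance time and confirming that the optimal interval is common to all starting points---since this is exactly where the convexity and two-crossings content of Lemma \ref{newlemma} is indispensable; once it is in hand, the remaining steps are bookkeeping together with the appeal to the uniqueness hypothesis.
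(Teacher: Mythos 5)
Your proposal is correct and follows essentially the same route as the paper: the paper offers no separate proof of Theorem \ref{thm:optimal}, presenting it as a summary of Lemma \ref{newlemma} (the optimal rule is an entrance time into a genuine interval, since $q<0$ forces the value function above $K$ near the origin) combined with Lemma \ref{v_value} (the explicit value of interval strategies), which is exactly the two-inequality argument you assemble, with the uniqueness hypothesis identifying $(l^*,u^*)$. The extra bookkeeping you supply on non-emptiness and finiteness via the convention $e^{-q\tau}\cdot 0=0$ is precisely the glue the paper leaves implicit.
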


Traditionally, the optimal stopping regions are identified using the  \textit{continuous} and \textit{smooth fit} conditions. Though we do not need them to prove our results,  we show in the following proposition that in fact under mild assumptions the optimal levels $l^*$ and $u^*$ satisfy these conditions.
\begin{Prop}\label{smooth_fit} Assume that the optimal stopping rule is finite, that is, the stopping region is non-empty.
Consider the following smooth and continuous fit conditions:
\begin{alignat*}{2}
&\frac{\partial}{\partial x}\widetilde{v}(x,l^*,u^*)|_{x\uparrow l^*}=\frac{\partial}{\partial x}\widetilde{v}(x,l^*,u^*)|_{x\downarrow l^*}&&\quad\textrm{\textit{(smooth fit} at }l^*),
\\
&\frac{\partial}{\partial x}\widetilde{v}(x,l^*,u^*)|_{x\downarrow u^*}=\frac{\partial}{\partial x}\widetilde{v}(x,l^*,u^*)|_{x\uparrow u^*}&&\quad\textrm{\textit{(smooth fit} at }u^*),
\\
&\widetilde{v}(x,l^*,u^*)|_{x\uparrow l^*}=\widetilde{v}(x,l^*,u^*)|_{x\downarrow l^*}&&\quad\textrm{\textit{(continuous fit} at }l^*),
\\
&\widetilde{v}(x,l^*,u^*)|_{x\downarrow u^*}=\widetilde{v}(x,l^*,u^*)|_{x\uparrow u^*}&&\quad\textrm{\textit{(continuous fit} at }u^*).
\end{alignat*}

(i) The optimal value function $\widetilde{V}(x)=\widetilde{v}(x,l^*,u^*)$ of the perpetual American put option for a completely asymmetric L\'evy process always satisfies
continuous fit conditions.

(ii) If $0$ is regular for $(-\infty, 0)$, then the smooth-fit principle is satisfied for $u^*$.

(iii) If $0$ is regular for $(0, \infty)$, then the smooth-fit principle is satisfied for $l^*$.

(iv) If $\widetilde{X}_t$ is of bounded variation and $d:=\lim_{t\downarrow 0} \frac{\widetilde{X}_t}{t}>0$ then
the smooth-fit principle at $u^*$ is not satisfied.

(v) If $\widetilde{X}_t$ is of bounded variation and $d<0$ then
the smooth-fit principle at $l^*$ is not satisfied.
\end{Prop}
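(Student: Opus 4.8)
The plan is to prove all five parts for the spectrally negative process $X$ and to transfer to the spectrally positive case through $\widehat X=-X$ and the mirror formula of Lemma~\ref{v_value}(ii), which interchanges the two boundaries and swaps regularity of $0$ for $(-\infty,0)$ with regularity of $0$ for $(0,\infty)$ (and the sign of the drift $d$), so that (ii)--(v) for $\widehat X$ follow from those for $X$. The tools are the one-sided representation of $v(x,l^*,u^*)$ in Lemma~\ref{v_value}(i); the boundary values of the scale function ($\W(0)=0$ and $\tfrac{\sigma^2}{2}\Wp(0+)=1$ when $\sigma>0$, and $\W(0)=1/d$ when $X$ has bounded variation; see \cite[Ch.~8]{KIntr}, \cite{kyprianou}); and the optimality of $(l^*,u^*)$, which gives $\partial_l v=\partial_u v=0$ at $(l^*,u^*)$ wherever the derivatives exist. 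Continuous fit (i) is then immediate: by Theorem~\ref{thm:optimal}, $\widetilde V=\widetilde v(\cdot,l^*,u^*)$, which is continuous by Lemma~\ref{newlemma}, so the one-sided limits coincide; equivalently one reads it off the formula, since the outer branch $(K-e^{l^*})e^{-\Phi(q)(l^*-x)}$ meets $K-e^{l^*}$ at $l^*$, and the $x>u^*$ branch tends to $K-e^{u^*}$ as $x\downarrow u^*$ because $r^{(q)}(0+,z)=e^{-\Phi(q)z}\W(0)$ and $\tfrac{\sigma^2}{2}\Wp(0+)=1$.

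For smooth fit I would separate the creeping end of the stopping interval from the jump-entry end. For $X$ the lower end $l^*$ is the creeping end, reached continuously from below, so the outer branch is the analytic function $(K-e^{l})e^{-\Phi(q)(l-x)}$, with left derivative $\Phi(q)(K-e^{l^*})$ at $l^*$ against the inner derivative $-e^{l^*}$. Differentiating this branch in $l$ and imposing $\partial_l v(x,l^*,u^*)=0$ for $x<l^*$ gives $e^{l^*}+\Phi(q)(K-e^{l^*})=0$, i.e.\ smooth fit at $l^*$; regularity of $0$ for $(0,\infty)$ is what guarantees that the analytic branch supplies the genuine one-sided derivative, proving (iii), and by the reduction it also yields the creeping end $u^*$ of $\widehat X$ inside (ii).

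The substantive case is the jump-entry end, $u^*$ for $X$, where I would follow the route announced in the introduction and apply the Meyer--It\^o formula with local time on the curve $\{x=u^*\}$ (\cite{Eis}, \cite{Eisenbaum&Kyprianou}, \cite{Peskirlocaltime}, \cite{Budhiphd}) to $e^{-qt}\widetilde v(X_t,l^*,u^*)$. On $(u^*,\i)$ the value solves $(\mathcal{L}-q)\widetilde v=0$ and on $[l^*,u^*]$ it equals $K-e^{x}$; since $e^{-qt}\widetilde v(X_t,l^*,u^*)$ is a martingale up to $\tau_{l^*,u^*}$, every finite-variation contribution must cancel, and the one still to control is the local-time term at $u^*$, carried by the factor $\tfrac{\sigma^2}{2}\bigl(\partial_x v(u^{*+})-\partial_x v(u^{*-})\bigr)$. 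When $\sigma>0$---so that $0$ is regular for $(-\infty,0)$---this local time is strictly increasing, forcing the coefficient to vanish, which is smooth fit at $u^*$ and proves (ii); the dual statement supplies the jump-entry end $l^*$ of $\widehat X$ in (iii). For the regular cases with $\sigma=0$ I would instead invoke the variational-inequality criterion of \cite[Thms.~4.1, 5.1, 5.2]{LambertonMikou}.

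For (iv) and (v) I would exhibit the kink directly. Let $X$ be of bounded variation with $d>0$, so $\sigma=0$, $\W(0)=1/d$, and $0$ is irregular for $(-\infty,0)$. With $C(y):=e^{-\Phi(q)(l^*-u^*+y)\vee 0}\bigl(K-e^{l^*\vee(u^*-y)}\bigr)>0$ and $J:=\int_0^{\i}\!\int_{(0,\i)}C(y)e^{-\Phi(q)z}\,\Pi(-z-\diff y)\,\diff z>0$, Lemma~\ref{v_value}(i) gives $\partial_x v(u^{*+})=\Wp(0+)\,J$, while continuous fit forces $\tfrac1d J=K-e^{u^*}$; hence $\partial_x v(u^{*+})=d\,\Wp(0+)\,(K-e^{u^*})\ge 0$, which cannot equal the inner derivative $-e^{u^*}<0$. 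Thus smooth fit fails at $u^*$, proving (iv), and (v) follows by the $\widehat X=-X$ reduction. I expect this jump-entry end to be the main obstacle: one must tie the one-sided $x$-derivative of the integral branch to the optimality condition and match the $\sigma=0$ behaviour of $\Wp(0+)$ to the regularity dichotomy, and it is exactly the local-time argument that lets one bypass these delicate scale-function asymptotics in the regular case.
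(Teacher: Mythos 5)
Your parts (i), (iii), (iv) and (v) are essentially correct, and (iii)--(v) take a genuinely different route from the paper. For (ii)--(v) the paper simply invokes the variational-inequality arguments of Lamberton--Mikou (their Thms.~4.1, 5.1 and 5.2), checking only the one inequality that needs monotonicity of $\widetilde V$; you instead observe that at the creeping end the outer branch of $v$ is the analytic function $(K-e^{l})e^{-\Phi(q)(l-x)}$, so the first-order condition $\partial_l v=0$ \emph{is} the smooth-fit equation $\Phi(q)(K-e^{l^*})=-e^{l^*}$, and at the jump-entry end with bounded variation you read the kink directly off Lemma~\ref{v_value}(i) using $\W(0+)=1/d$, $\Wp(0+)\ge 0$, and continuous fit. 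This is more self-contained than the paper's citation, and it matches the paper's own remark that first-order conditions can replace smooth fit in the examples. (Minor caveats: the first-order-condition argument needs $l^*<u^*$, i.e.\ an interior maximizer, and in (iv) the interchange of $\lim_{x\downarrow u^*}$ with the double integral should be justified; both are routine.)

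The genuine gap is in your treatment of (ii) for $\sigma>0$. You argue that $e^{-qt}\widetilde v(X_t,l^*,u^*)$ is a martingale up to $\tau_{l^*,u^*}$ and that therefore every finite-variation contribution, in particular the local-time term at $u^*$, must vanish. But before the entrance time the process never visits $[l^*,u^*]$, in particular never visits $u^*$, so $L^{u^*}_t\equiv 0$ on $[0,\tau_{l^*,u^*}]$: the local-time term contributes nothing on that stochastic interval \emph{whatever} the size of the derivative jump, and the martingale property up to entrance imposes no constraint on it. Information about the kink can only come from the value process \emph{after} the process has reached $u^*$. This is exactly how the paper's supplementary proof under $\sigma>0$ works: it uses the \emph{global} supermartingale property of $e^{-qt}v(X_t,l^*,u^*)$, starts the process at $x=u^*$ (where, by regularity, $L^{u^*}$ immediately increases, and the local-time integral is of unbounded variation), combines this with the one-sided inequality $\partial_x v(x,l^*,u^*)|_{x\downarrow u^*}\ge \partial_x v(x,l^*,u^*)|_{x\uparrow u^*}$ coming from convexity and domination of the payoff, and derives a contradiction with the supermartingale inequality unless the derivative jump is $\le 0$. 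Your argument can be repaired either by adopting this supermartingale/contradiction scheme, or more simply by extending your Lamberton--Mikou citation to all regular cases (their Thms.~4.1 and 5.1 require only regularity of $0$ for $(-\infty,0)$, not $\sigma=0$), which is in fact the paper's primary proof of (ii) and (iii).
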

\begin{Rem}
The equivalent conditions of the regularity of $0$ for negative half-line in terms of the volatility $\sigma$ and the jump measure $\Pi$
are given, e.g,. in \cite[Prop. 7]{AliliKyprianou} and \cite[Prop. 4.1]{LambertonMikou}.
\end{Rem}

\begin{Rem}
We do not give any specific procedure for computing the optimal thresholds $l^*, u^*$. The main reason is that for a wild class of spectrally negative L\'evy process $X$ the scale function $W^{(q)}(x)$
is in the form of liner combination of some exponential functions. This is the case for example when jumps have phase-type  distribution. In this case the value functions $v(x,l,u)$ and $\widehat{v}(x,l,u)$ could be calculated explicitly.
Thus identifying optimal $l^*$ and $u^*$ is equivalent to finding the values $l$ and $u$ maximizing these functions. That could be done explicitly or   by any numerical procedure finding the maximum of the function. For instance, in our Examples in Section \ref{sec:examples}, first-order conditions can be invoked instead of the smooth-fit condition, to compute $l^*$ and $u^*$.
\end{Rem}

\begin{proof}
The first statement follows from the continuity of the value function proved in Lemma \ref{newlemma}.
Statement (ii) follows from very similar arguments like the ones given in the proof of \cite[Thms. 4.1 and 5.1]{LambertonMikou}.
The only explanation requires inequality appearing in the proof of \cite[Thms. 4.1]{LambertonMikou}
\[\widetilde{V}(u^*)\geq \E_{u^*+h} \left[e^{-q\tau_h}g\left(\widetilde{S}_{\tau_h}\right)\right]\]
for
\begin{equation}\label{payoff}
g(s):=(K-s)^+
\end{equation}
and $\tau_h:=\inf\{t\geq 0: \widetilde{S}_t<u^*\}$ for $h>0$.
This inequality is straightforward from definition of the value function because $\tau_h$ is
a stopping time since, by Lemma \ref{newlemma}, we have $\widetilde{V}(u^*)\geq \widetilde{V}(u^*+h)$.
The case of the lower critical point $l^*$ could be resolved in the same way by taking $l^*-h$ instead of $u^*+h$.
Finally the two last statements follow from the proof of \cite[Thm. 5.2]{LambertonMikou}. In particular, the inequality
\cite[(5.13)]{LambertonMikou} is obvious in this case (by taking the negative interest rate $q<0$ instead $r$ there).

If we assume a more stronger condition than the regularity of $0$ for negative or positive half-line, namely that
\begin{align}
\sigma >0\label{sigma}
\end{align}
then we can give a different and new proof of the smooth fit conditions at $l^*$ and $u^*$.
We also assume that the density of jump measure satisfies
\begin{equation}\label{density}
\pi(x)\leq C|x|^{-1-\alpha}
\end{equation}
in a neighborhood of the origin,
for some $0<\alpha<1$ and $C>0$.

We assume that $X_t$ is spectrally negative, hence $\widetilde{v}=v$. The proof for the spectrally positive case is the same.
Under condition \eqref{sigma} (recall also the global assumption \eqref{density}), we have
\begin{equation}\label{scalesmooth}
W^{(q)}\in \mathcal{C}^3(\mathbb{R}_+);
\end{equation}
see 
\cite[Thm. 3.11, p. 140]{kyprianou} and hence by Lemmas
\ref{newlemma} and \ref{v_value} we can conclude that
$v(x,l^*,u^*)\in \mathcal{C}^2(\mathbb{R}_+)$ for $x \neq l^*$ and $x\neq u^*$ because the optimal thresholds $l^*$ and $u^*$ do not depend on initial asset prices $s=e^x$.
Indeed, in this case
\begin{align}
\frac{\partial}{\partial x}v(x,l,u)=&-e^x\mathbbm{1}_{(x\in [l,u])}+\Phi(q)\left(K-e^l\right)e^{-\Phi(q)(l-x)}\mathbbm{1}_{(x<l)}\label{eq:v_value}\\
&+\Bigg\{\int_0^\i\int_{(0,\i)} e^{-\Phi(q)(l-u+y)\vee 0}\left(K-e^{l\vee (u-y)}\right)\frac{\partial}{\partial x}r^{(q)}(x-u,z)\Pi(-z-\diff y)\diff z\nn\\
&\qquad+\left(K-e^u\right)\frac{\sigma^2}{2}\left(W^{(q)\prime\prime}(x-u)-\Phi(q)W^{(q)\prime}(x-u)\right)\Bigg\}\mathbbm{1}_{(x>u)}\nn
\end{align}
for
\begin{align*}
\frac{\partial}{\partial x} r^{(q)}(x-u,z)=e^{-\Phi(q)z}W^{(q)\prime}(x-u)-W^{(q)\prime}(x-u-z)
\end{align*}
and
\begin{align}
\frac{\partial^2}{\partial^2 x}v(x,l,u)=&-e^x\mathbbm{1}_{(x\in [l,u])}+\Phi(q)^2\left(K-e^l\right)e^{-\Phi(q)(l-x)}\mathbbm{1}_{(x<l)}\label{eq:v_value}\\
&+\Bigg\{\int_0^\i\int_{(0,\i)} e^{-\Phi(q)(l-u+y)\vee 0}\left(K-e^{l\vee (u-y)}\right)\frac{\partial^2}{\partial^2 x}r^{(q)}(x-u,z)\Pi(-z-\diff y)\diff z\nn\\
&\qquad+\left(K-e^u\right)\frac{\sigma^2}{2}\left(W^{(q)\prime\prime\prime}(x-u)-\Phi(q)W^{(q)\prime\prime}(x-u)\right)\Bigg\}\mathbbm{1}_{(x>u)}\nn
\end{align}
for
\begin{align*}
\frac{\partial^2}{\partial^2 x} r^{(q)}(x-u,z)=e^{-\Phi(q)z}W^{(q)\prime\prime}(x-u)-W^{(q)\prime\prime}(x-u-z).
\end{align*}
Hence we can apply the change of variable formula.

From Theorem \ref{thm:optimal} it follows that $v(x,l^*,u^*)$ is the value function of the American put option
and from the general optimal stopping theory we know that $v(x,l^*,u^*)$ dominates the payoff function and $e^{-qt}v(X_t,l^*,u^*)$ is a supermartingale
(see  
\cite{Peskir}).

Let us focus on the boundary $u^*$. As the value function dominates the payoff function and both are non-increasing and convex, we have that
\begin{equation}\label{onsideineq}\frac{\partial}{\partial x}v(x,l^*,u^*)|_{x\downarrow u^*}\geq\frac{\partial}{\partial x}v(x,l^*,u^*)|_{x\uparrow u^*}.\end{equation}
To prove the inequality in the opposite direction we
use the change of variable formula presented in\\ 
\cite[p. 208]{Eisenbaum&Kyprianou} together with Dynkin formula
\begin{align}
e^{-qt}v(X_t,l^*,u^*)=&M_t+v(x,l^*,u^*)+\int_0^t(\mathcal{A}-q)v_1(X_s,l^*,u^*)\diff s +\int_0^t(\mathcal{A}-q)v_2(X_s,l^*,u^*)\diff s\nn\\
&+ \int_0^te^{-qs}\frac{\partial}{\partial x}\left(v(X_{s+},l^*,u^*)-v(X_{s-},l^*,u^*)\right)\diff L^{u^*}_s,\nn
\end{align}
where $M_t$ is a local martingale, $L^{u^*}_s$ is a local time of $X$ at $u^*$, $\mathcal{A}$ is an infinitesimal generator of $X$ and $v_1(x,l^*u^*)=v(x,l^*,u^*)|_{x>u^*}$ and $v_2(x,l^*u^*)=v(x,l^*,u^*)|_{x\leq u^*}$.
Note that by Lemma \ref{v_value} and by \eqref{scalesmooth}, the functions $v_1$ and $v_2$ are in the domain of the infinitesimal generator $\mathcal{A}$ (see
\cite[Thm. 2]{Eisenbaum&Kyprianou}).
Now, from the general theory of optimal stopping (see \cite{Peskir} for details) we know that $e^{-qt}v(X_t,l^*,u^*)$ is a supermartingale. Thus,
\begin{align}
\E_{(x)}\left\{\int_w^t(\mathcal{A}-q)v_1(X_s,l^*,u^*)\diff s+\int_w^t(\mathcal{A}-q)v_2(X_s,l^*,u^*)\diff s +\int_w^te^{-qs}\frac{\partial}{\partial x}\left(v(X_{s+},l^*,u^*)-v(X_{s-},l^*,u^*)\right)\diff L^{u^*}_s\right\}\leq 0\label{inequ1}
\end{align}
for any $0\leq w\leq t$.
From 
\cite[Thm. 3]{Eisenbaum&Kyprianou} it follows that the process
$$t\rightarrow \int_w^te^{-qs}\frac{\partial}{\partial x}\left(v(X_{s+},l^*,u^*)-v(X_{s-},l^*,u^*)\right)\diff L^{u^*}_s$$ is of unbounded variation on any finite interval
similarly as $X_t$ is by assumption
\eqref{sigma}.
On the other hand, the processes
$t\rightarrow \int_0^t(\mathcal{A}-q)v_1(X_s,l^*,u^*)\diff s $ and $t\rightarrow \int_0^t(\mathcal{A}-q)v_2(X_s,l^*,u^*)\diff s $ are of bounded variation.
Thus, taking $t\rightarrow w$ in~\eqref{inequ1} we can conclude that
\begin{align}\label{Uwt}
U(w,t):=\E_{(x)}\int_w^te^{-qs}\frac{\partial}{\partial x}\left(v(X_{s+},l^*,u^*)-v(X_{s-},l^*,u^*)\right)\diff L^{u^*}_s\leq 0
\end{align}
for all sufficiently small $w$ and $t$.
Indeed, assume \emph{a contrario} that for some $w<t$, we have $U(w,t)>0$.
As the local time $L_t^{u^*}$ is nondecreasing and it increases only when process $X_t$ enters the half-line $(u^*,\infty)$ leaving the set $(-\infty, u^*)$,
then, by taking $x=u^*$, from \eqref{onsideineq}
we have that $U(w,t)$ is equal to its variation which is $+\infty$. Then
\begin{align*}
&U(w,t)+\E_{(x)}\left\{\int_w^t(\mathcal{A}-q)v_1(X_s,l^*,u^*)\diff s+\int_w^t(\mathcal{A}-q)v_2(X_s,l^*,u^*)\diff s\right\}\\&\qquad  \geq
U(w,t) + (w-t)\E_{(x)}\inf_{s\in[w,t]}\left( (\mathcal{A}-q)v_1(X_s,l^*,u^*)+(\mathcal{A}-q)v_2(X_s,l^*,u^*)\right)=+\infty\end{align*}
which is a contradiction to \eqref{inequ1}.

From \eqref{Uwt} the following inequality must hold true
$$\frac{\partial}{\partial x}v(x,l^*,u^*)|_{x\downarrow u^*}-\frac{\partial}{\partial x}v(x,l^*,u^*)|_{x\uparrow u^*}\leq 0.$$
This inequality together with the already proved converse inequality $\frac{\partial}{\partial x}v(x,l^*,u^*)|_{x\downarrow u^*}\geq\frac{\partial}{\partial x}v(x,l^*,u^*)|_{x\uparrow u^*}$
completes the proof of the \textit{smooth fit} property at $u^*$. The smooth fit property at $l^*$ follows by similar arguments or by direct calculation.
\end{proof}

 \bigskip

\section{Swing put option}\label{sec:swing}


A multiple stopping option, the so-called \textit{Swing option}, is a derivative with $N\geq 1$ exercise opportunities with some refraction period between them. Let $\{\delta_i\}_{i=1}^{N-1}$ be the refraction periods between consecutive exercise times. We assume that $\delta_i$ are positive random variables and that
 $\widetilde X_{\delta_i}$
has no atoms. The value of the perpetual Swing put option is defined as follows
\begin{align}
V^{(N)}_p(s):=V^{(N)}(s):=\sup\limits_{\underline{\tau}\in\mathcal{T}^{(N)}}\E_s\left[\sum\limits_{i=1}^N e^{-q\tau_i}(K-\widetilde{S}_{\tau_i})^+;\ \tau_i<\infty\right],\label{swing}
\end{align}
for some strike price $K>0$, where
\begin{align}
\mathcal{T}^N:=\{\underline{\tau}=(\tau_1,...,\tau_N):\ \tau_1\in\mathcal{T}\textrm{ and }\tau_i=\tau_{i-1}+\delta_{i-1}+\zeta\circ\theta_{\tau_{i-1}+\delta_{i-1}}
\textrm{ for } \zeta \in \mathcal{T} \textrm{ and }i=2,...,N
\},\label{Tn}
\end{align}
where $\theta_t$ is a shift operator associated with the transition probability of $\widetilde{X}$.
Note that $V^{(1)}(s)$ is the value of the American put option analyzed in the previous section and that condition \eqref{condition1} implies that  $ V^{(N)}(s) < +\infty$.
To solve the optimal stopping problem \eqref{swing} of the Swing option we follow the approach of  
\cite{Leung&Yamazaki&Zhang} and define recursively   following  single optimal stopping problems:
\begin{align}
\widetilde{V}^{(k)}(s):=\sup\limits_{\tau\in\mathcal{T}}\E_s\left[e^{-q\tau}g^{(k)}(\widetilde{S}_\tau);\ \tau<\infty\right]\label{Vk}
\end{align}
with
\begin{align}
g^{(k)}(s):=g(s)+\E_s\left[e^{-q\delta_k}\widetilde{V}^{(k-1)}(\widetilde{S}_{\delta_k})\right]\label{recursion}
\end{align}
for $g(s)$ given in \eqref{payoff},
$k=1,..,N$ and $\widetilde{V}^{(0)}(s):=0$.
\begin{Lem}\label{lem:Vk_convex}
For every $k\in\{1,...,N\}$ and all $s\in[0,\infty)$ the function $\widetilde{V}^k(s)$ is a non-increasing and convex function of $s$.
\end{Lem}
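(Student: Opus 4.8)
The plan is to argue by induction on $k$, exploiting the recursive structure \eqref{Vk}--\eqref{recursion}. The base case is the trivial one $\widetilde{V}^{(0)}\equiv 0$, which is certainly non-increasing and convex. For the inductive step I assume $\widetilde{V}^{(k-1)}$ is non-increasing and convex and split the passage to $\widetilde{V}^{(k)}$ into two implications: first, that the intermediate payoff $g^{(k)}$ inherits both properties; and second, that the optimal-stopping value $\widetilde{V}^{(k)}$ built on $g^{(k)}$ inherits them in turn. Note that for $k=1$ this reproduces, and is consistent with, Lemma \ref{newlemma}, since then $g^{(1)}=g=(K-\cdot)^+$.

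For the first implication the key observation is that a geometric L\'evy process depends \emph{affinely} on its starting point: by spatial homogeneity, under $\P_s$ one has $\widetilde{S}_{\delta_k}=s\,e^{\widetilde{X}_{\delta_k}}$, where on the right-hand side $\widetilde{X}$ is started from $0$, so the law of the multiplier $\xi:=e^{\widetilde{X}_{\delta_k}}>0$ (jointly with $\delta_k$) does not depend on $s$. Hence, for each fixed realisation of $(\delta_k,\xi)$, the map $s\mapsto e^{-q\delta_k}\widetilde{V}^{(k-1)}(s\xi)$ is the composition of the convex, non-increasing function $\widetilde{V}^{(k-1)}$ with the linear map $s\mapsto s\xi$, multiplied by the positive constant $e^{-q\delta_k}$; it is therefore again convex and non-increasing. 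Integrating these pointwise inequalities against the ($s$-independent) law of $(\delta_k,\xi)$ preserves both properties, so $s\mapsto \E_s[e^{-q\delta_k}\widetilde{V}^{(k-1)}(\widetilde{S}_{\delta_k})]$ is convex and non-increasing, and adding the convex, non-increasing payoff $g$ shows the same for $g^{(k)}$.

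For the second implication I use a coupling across starting points. Because $\mathcal{T}$ consists of stopping times of the filtration generated by the driving process $\widetilde{X}$, the admissible class does not depend on the initial value, and under $\P_s$ one may write $\widetilde{S}_\tau=s\,e^{\widetilde{X}_\tau}$ for every $\tau\in\mathcal{T}$. Fixing $\tau$ and $\lambda\in[0,1]$, the affine identity $s_\lambda e^{\widetilde{X}_\tau}=\lambda s_1 e^{\widetilde{X}_\tau}+(1-\lambda)s_2 e^{\widetilde{X}_\tau}$ with $s_\lambda:=\lambda s_1+(1-\lambda)s_2$, together with convexity of $g^{(k)}$ and $e^{-q\tau}\geq 0$, gives pointwise on $\{\tau<\infty\}$
\begin{equation*}
e^{-q\tau}g^{(k)}\!\left(s_\lambda e^{\widetilde{X}_\tau}\right)\leq \lambda\, e^{-q\tau}g^{(k)}\!\left(s_1 e^{\widetilde{X}_\tau}\right)+(1-\lambda)\, e^{-q\tau}g^{(k)}\!\left(s_2 e^{\widetilde{X}_\tau}\right).
\end{equation*}
Taking expectations, bounding each term on the right by the corresponding supremum, and finally taking the supremum over $\tau$ on the left yields the convexity inequality $\widetilde{V}^{(k)}(s_\lambda)\leq\lambda\widetilde{V}^{(k)}(s_1)+(1-\lambda)\widetilde{V}^{(k)}(s_2)$. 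Monotonicity follows from the same coupling: for $s_1\leq s_2$ one has $s_1 e^{\widetilde{X}_\tau}\leq s_2 e^{\widetilde{X}_\tau}$, so $g^{(k)}(s_1 e^{\widetilde{X}_\tau})\geq g^{(k)}(s_2 e^{\widetilde{X}_\tau})$ by monotonicity of $g^{(k)}$, and taking expectations and suprema gives $\widetilde{V}^{(k)}(s_1)\geq\widetilde{V}^{(k)}(s_2)$.

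The step that needs the most care is the first implication: one must ensure that the expectation defining $g^{(k)}$ is finite and that passing the convexity and monotonicity estimates through the expectation is legitimate. Finiteness is guaranteed by the standing assumption \eqref{condition1}, which forces every value function $\widetilde{V}^{(j)}$, and hence $g^{(k)}$, to be finite; and since the initial value enters only through the multiplicative factor $\xi=e^{\widetilde{X}_{\delta_k}}$, whose law does not depend on $s$, the pointwise inequalities can simply be integrated against this fixed law. Alternatively, once $g^{(k)}$ is shown to be continuous, non-increasing and convex, the second implication may instead be obtained by repeating the argument of Lemma \ref{newlemma} (following \cite{Ekstrom}) verbatim with $g^{(k)}$ in place of $(K-\cdot)^+$, as that argument uses only these three properties of the payoff.
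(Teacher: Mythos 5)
Your proof is correct, and its skeleton coincides with the paper's: induction on $k$, with the inductive step split into (a) showing that $g^{(k)}$ in \eqref{recursion} inherits monotonicity and convexity from $\widetilde{V}^{(k-1)}$ via the spatial homogeneity $\widetilde{S}_{\delta_k}=s\,e^{\widetilde{X}_{\delta_k}}$ under $\P_s$, and (b) showing that the optimal-stopping value \eqref{Vk} built on $g^{(k)}$ inherits both properties. The genuine difference lies in step (b): the paper disposes of it by invoking ``the same arguments as in \cite[Cor.~2.6]{Ekstrom}'' (the route that, in Lemma \ref{newlemma}, passes through convexity of the European price, mollification, and \cite{Bergmanetal}), whereas you prove it directly by a coupling argument --- since $\mathcal{T}$ does not depend on the initial value and $\widetilde{S}_\tau=s\,e^{\widetilde{X}_\tau}$ is affine in $s$ for each fixed $\tau$, every map $s\mapsto \E\left[e^{-q\tau}g^{(k)}\left(s\,e^{\widetilde{X}_\tau}\right);\ \tau<\infty\right]$ is convex and non-increasing, and a supremum of such functions remains so. This buys self-containedness: your argument uses only the exponential L\'evy structure, while the paper defers to a result established for a more general class of models. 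Your choice of base case $k=0$ (rather than the paper's $k=1$) is also a mild streamlining, since $g^{(1)}=g$ makes $k=1$ an instance of the inductive step rather than a separate appeal to the American-put case. One caveat, at the same level of rigor as the paper itself: the finiteness of $\widetilde{V}^{(j)}$, and hence of $g^{(k)}$, is asserted from the standing assumption \eqref{condition1} rather than derived; the paper does exactly the same, so this is not a gap relative to its own standard.
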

\begin{proof}
We prove the result by induction.  For $k=1$ and the payoff function \eqref{payoff} the result follows immediately from the fact that the value of the American put option is
a non-increasing and convex function of $s$ (see e.g. 
\cite[Cor. 2.6]{Ekstrom}).
Now, suppose that the results holds for $k=l-1$ for some $l\in\{2,...,N-1\}$. Thus
\begin{align}
g^{(k)}(s):=g(s)+\E_s\left[e^{-q\delta_k}\widetilde{V}^{(k-1)}(\widetilde{S}_{\delta_k})\right]=g(s)+\E\left[e^{-q\delta_k}\widetilde{V}^{(k-1)}(s\widetilde{S}_{\delta_k})\right]\nn
\end{align}
is also non-increasing and convex with respect to $s$ as the functions $g(s)$ and $V^{(k-1)}$ are of the same type.
We can use the same arguments as the ones given in the proof of 
\cite[Cor. 2.6]{Ekstrom} to prove that
$\widetilde{V}^{(l)}(s)$ is a non-increasing and convex function of $s$.
Then, the claim is proved.
\end{proof}
From the previous considerations given in Section \ref{sec:valuation} we know that there exist $\mathbf{l}_1^*=\mathbf{l}^*=e^{l^*}$ and $\mathbf{u}_1^*=\mathbf{u}^*=e^{u^*}$ such that $\widetilde{V}^{(1)}(s)=g(s)$ for any $s\in[\mathbf{l}_1^*,\mathbf{u}_1^*]$.
We now extend this result for $k\in\{2,...,N\}$ adapting the idea of the proof of Lemma 2.3 of 
\cite{Carmona&Touzi}.
\begin{Lem}\label{lem:Vk_lu}
Let $0\leq \mathbf{l}_1^*\leq \mathbf{u}_1^*\leq K$. Then for all $k\in\{1,...,N\}$ and all $s\in[\mathbf{l}^*_1,\mathbf{u}^*_1]$, we have $g^{(k)}(s)=\widetilde{V}^{(k)}(s)$.
\end{Lem}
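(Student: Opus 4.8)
The plan is to establish the two inequalities $\widetilde{V}^{(k)}(s)\ge g^{(k)}(s)$ and $\widetilde{V}^{(k)}(s)\le g^{(k)}(s)$ separately for $s\in[\mathbf{l}_1^*,\mathbf{u}_1^*]$. The first is immediate from the definition \eqref{Vk}: choosing $\tau\equiv 0$ in the supremum gives $\widetilde{V}^{(k)}(s)\ge g^{(k)}(s)$ for every $s$. All the work lies in the reverse inequality, and it rests on the decomposition $g^{(k)}=g+h^{(k)}$, where $h^{(k)}(s):=\E_s\left[e^{-q\delta_k}\widetilde{V}^{(k-1)}(\widetilde{S}_{\delta_k})\right]$ is read off from \eqref{recursion}.

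The key step is to show that $t\mapsto e^{-qt}h^{(k)}(\widetilde{S}_t)$ is a supermartingale. I would start from the general optimal-stopping fact (see \cite{Peskir}) that, $\widetilde{V}^{(k-1)}$ being the value function of \eqref{Vk}, the Snell envelope $M_t:=e^{-qt}\widetilde{V}^{(k-1)}(\widetilde{S}_t)$ is a supermartingale. Using the time-homogeneous Markov property and the independence of the refraction time $\delta_k$ from the driving process, one rewrites $e^{-qt}h^{(k)}(\widetilde{S}_t)=\int_0^\infty \E\!\left[M_{t+r}\mid\mathcal{F}_t\right]\mu_{\delta_k}(\diff r)$, where $\mu_{\delta_k}$ is the law of $\delta_k$. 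For $u\ge t$, the supermartingale property of $M$ gives $\E[M_{u+r}\mid\mathcal{F}_t]\le \E[M_{t+r}\mid\mathcal{F}_t]$ for each fixed $r$; integrating against $\mu_{\delta_k}$ yields $\E[e^{-qu}h^{(k)}(\widetilde{S}_u)\mid\mathcal{F}_t]\le e^{-qt}h^{(k)}(\widetilde{S}_t)$, the claimed supermartingale property. By optional stopping (with the requisite integrability supplied by \eqref{condition1} and its consequence $V^{(N)}<\infty$) this gives $\E_s\left[e^{-q\tau}h^{(k)}(\widetilde{S}_\tau)\right]\le h^{(k)}(s)$ for every $\tau\in\mathcal{T}$.

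With this in hand the reverse inequality follows by splitting the payoff. For any $\tau\in\mathcal{T}$,
\[\E_s\left[e^{-q\tau}g^{(k)}(\widetilde{S}_\tau)\right]=\E_s\left[e^{-q\tau}g(\widetilde{S}_\tau)\right]+\E_s\left[e^{-q\tau}h^{(k)}(\widetilde{S}_\tau)\right]\le \widetilde{V}^{(1)}(s)+h^{(k)}(s),\]
where the first term is bounded by $\widetilde{V}^{(1)}(s)=\sup_\tau\E_s[e^{-q\tau}g(\widetilde{S}_\tau)]$ and the second by the supermartingale estimate. Taking the supremum over $\tau$ gives $\widetilde{V}^{(k)}(s)\le \widetilde{V}^{(1)}(s)+h^{(k)}(s)$ for all $s$. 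Now I invoke the result of Section \ref{sec:valuation}: on the stopping interval one has $\widetilde{V}^{(1)}(s)=g(s)$ for $s\in[\mathbf{l}_1^*,\mathbf{u}_1^*]$, whence $\widetilde{V}^{(k)}(s)\le g(s)+h^{(k)}(s)=g^{(k)}(s)$ there. Combined with $\widetilde{V}^{(k)}(s)\ge g^{(k)}(s)$ this proves the equality.

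The main obstacle I anticipate is the rigorous justification of the supermartingale property and of the optional-stopping step when $q<0$: since $e^{-q\tau}$ grows in $\tau$, finiteness and uniform integrability are not automatic and must be secured through the standing assumption \eqref{condition1} (via Lemma \ref{lem:condition}), together with the convention $e^{-q\tau}\cdot 0=0$ on $\{\tau=\infty\}$. I would also point out that no genuine induction on $k$ is needed: the argument for a fixed $k$ uses only that $\widetilde{V}^{(k-1)}$ is a value function — hence has a supermartingale Snell envelope — and the already-established identity $\widetilde{V}^{(1)}=g$ on $[\mathbf{l}_1^*,\mathbf{u}_1^*]$, so the whole family of cases follows at once.
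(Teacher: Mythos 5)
Your proof is correct and follows essentially the same route as the paper: both split $g^{(k)}$ into $g$ plus the continuation term, bound the first piece by $\widetilde{V}^{(1)}$ and the second via the supermartingale property of $e^{-qt}\widetilde{V}^{(k-1)}(\widetilde{S}_t)$ (your inequality $\E_s\bigl[e^{-q\tau}h^{(k)}(\widetilde{S}_\tau)\bigr]\leq h^{(k)}(s)$ is, by the Markov property, exactly the paper's bound $\E_s\bigl[e^{-q(\tau+\delta_1)}\widetilde{V}^{(k-1)}(\widetilde{S}_{\tau+\delta_1})\bigr]\leq\E_s\bigl[e^{-q\delta_1}\widetilde{V}^{(k-1)}(\widetilde{S}_{\delta_1})\bigr]$), and then invoke $\widetilde{V}^{(1)}=g$ on $[\mathbf{l}_1^*,\mathbf{u}_1^*]$. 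Your explicit derivation of the supermartingale property of $e^{-qt}h^{(k)}(\widetilde{S}_t)$ and the attention to integrability under $q<0$ are just a more careful rendering of the step the paper asserts in one line.
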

\begin{proof}
It is sufficient to show that $\widetilde{V}^{(k)}(s)\leq g^{(k)}(s)$ on $[\mathbf{l}^*_1,\mathbf{u}^*_1]$ as the reverse inequality is trivial. From the definition of $g^{(k)}$ we can write
\begin{align}
\widetilde{V}^{(k)}(s)&=\sup\limits_{\tau\in\mathcal{T}}\E_s\left[e^{-q\tau}g(\widetilde{S}_\tau)+e^{-q(\tau+\delta_1)}\widetilde{V}^{(k-1)}(\widetilde{S}_{\tau+\delta_1})\right]\leq \widetilde{V}^{(1)}(s)+\sup\limits_{\tau\in\mathcal{T}}\E_s\left[e^{-q(\tau+\delta_1)}\widetilde{V}^{(k-1)}(\widetilde{S}_{\tau+\delta_1})\right].\nn
\end{align}
From the general theory of optimal stopping we know that $e^{-qt}\widetilde{V}^{(k-1)}(\widetilde{S}_{t})$ is a supermartingale.
Thus, we have that
$$\E_s\left[e^{-q(\tau+\delta_1)}\widetilde{V}^{(k-1)}(\widetilde{S}_{\tau+\delta_1})\right]\leq\E_s\left[e^{-q(\delta_1)}\widetilde{V}^{(k-1)}(\widetilde{S}_{\delta_1})\right].$$
As $\widetilde{V}^{(1)}(s)=g(s)$ for $s\in[\mathbf{l}^*_1,\mathbf{u}^*_1]$, we get on $s\in[\mathbf{l}^*_1,\mathbf{u}^*_1]$
\begin{align}
\widetilde{V}^{(k)}(s)\leq g(s)+\E_s\left[e^{-q\delta_1}\widetilde{V}^{(k-1)}(\widetilde{S}_{\delta_1})\right]=g^{(k)}(s).\nn
\end{align}
This completes the proof.
\end{proof}

Let us define a set of stopping times for the recursive single stopping problems \eqref{Vk}-\eqref{recursion} as follows
\begin{align}
&\tau_1^*:=\inf\{t\geq 0:\ \widetilde{V}^{(1)}(\widetilde{S}_t)=g^{(1)}(\widetilde{S}_t)\},\label{tau*1}\\
&\tau_k^*:=\inf\{t\geq\tau_{k-1}^*+\delta_{k-1}:\ \widetilde{V}^{(k)}(\widetilde{S}_t)=g^{(k)}(\widetilde{S}_t)\}.\label{tau*k}
\end{align}
We prove below the equivalence between   \eqref{Vk} and the Swing problem \eqref{swing}. Thus, by the general optimal stopping theory,
we will prove that the set of stopping times
\begin{align}
\underline{\tau}^*:=(\tau_1^*,...,\tau_N^*)\label{tauset}
\end{align}
solves the Swing optimal stopping problem \eqref{swing}.
\begin{Thm}\label{optimalswing}
The Swing optimal stopping problem $V^{(N)}$ with $N$ exercise opportunities defined in \eqref{swing} with the put payoff function \eqref{payoff}
is equal to the recursive single stopping problems \eqref{Vk}-\eqref{recursion}. The optimal stopping rule (if it is finite)
for the Swing problem is given by $\underline{\tau}^*$ defined in \eqref{tauset}.
\end{Thm}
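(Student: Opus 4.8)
The plan is to prove the equality $V^{(N)}=\widetilde V^{(N)}$ by induction on the number $N$ of exercise rights, the engine being a \emph{dynamic programming principle} that peels off the first exercise opportunity. The base case $N=1$ is exactly the single-stopping problem treated in Section \ref{sec:valuation}, since $g^{(1)}=g$ and $\widetilde V^{(0)}\equiv 0$. For the inductive step I would first establish the recursion
\begin{align}
V^{(N)}(s)=\sup_{\tau\in\mathcal T}\E_s\left[e^{-q\tau}\Big(g(\widetilde S_\tau)+\E_{\widetilde S_\tau}\big[e^{-q\delta}\,V^{(N-1)}(\widetilde S_{\delta})\big]\Big);\ \tau<\infty\right],\nn
\end{align}
where $\delta$ denotes a generic copy of the i.i.d.\ refraction times. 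Once this is in hand, the inductive hypothesis $V^{(N-1)}=\widetilde V^{(N-1)}$ identifies the inner bracket with $g^{(N)}(\widetilde S_\tau)$ from \eqref{recursion}, so that the right-hand side is precisely $\widetilde V^{(N)}(s)$ as defined in \eqref{Vk}.

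The heart of the argument is the above recursion, which I would prove by two inequalities using the strong Markov property and the shift structure encoded in \eqref{Tn}. For the upper bound, take any admissible $\underline\tau=(\tau_1,\dots,\tau_N)\in\mathcal T^{N}$ and split the payoff into the first term and the remainder. By construction $\tau_i=\tau_1+\delta_1+\zeta_i\circ\theta_{\tau_1+\delta_1}$ for $i\ge 2$, so the tail sum factors as $e^{-q(\tau_1+\delta_1)}$ times a functional of the post-refraction trajectory that is itself the payoff of an admissible $(N-1)$-tuple started at $\widetilde S_{\tau_1+\delta_1}$. Conditioning on $\mathcal F_{\tau_1+\delta_1}$ and applying the strong Markov property bounds the conditional expectation of this tail by $V^{(N-1)}(\widetilde S_{\tau_1+\delta_1})$; conditioning once more at $\tau_1$ and using independence of $\delta_1$ then produces the inner bracket and, after taking the supremum over $\tau_1$, the claimed upper bound. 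For the lower bound I would reverse the construction: given $\tau_1\in\mathcal T$ and an $\varepsilon$-optimal $(N-1)$-strategy for $V^{(N-1)}$ started afresh after the refraction period, concatenating them through the shift operator produces an admissible element of $\mathcal T^{N}$ whose value approximates the right-hand side, and letting $\varepsilon\downarrow 0$ closes the gap.

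With the equality $V^{(N)}=\widetilde V^{(N)}$ established, the optimal strategy follows from the general optimal-stopping theory applied to each single-stopping problem \eqref{Vk}. The payoff $g^{(k)}$ is continuous and, by Lemma \ref{lem:Vk_convex}, leads to a non-increasing convex value, so the optimal first time is the entrance time $\tau_k^*$ of \eqref{tau*1}--\eqref{tau*k} into the region where $\widetilde V^{(k)}$ meets $g^{(k)}$, and the supremum defining $V^{(N)}$ is attained at $\underline\tau^*$ of \eqref{tauset} whenever these times are finite. I expect the main obstacle to be the rigorous justification of the recursion's tail decomposition, namely the measurable correspondence between admissible $N$-tuples in $\mathcal T^{N}$ and pairs consisting of a first stopping time together with an admissible $(N-1)$-continuation expressed through $\theta_{\tau_1+\delta_1}$; the assumption that $\widetilde X_{\delta_i}$ has no atoms and the i.i.d.\ nature of the $\delta_i$ are what make the strong Markov step and the distributional identification of the refraction copies legitimate.
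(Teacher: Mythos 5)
Your proposal is sound and shares the paper's core decomposition: peel off the first exercise opportunity via the strong Markov property and the i.i.d.\ refraction times, and identify $V^{(N)}$ with the recursive single-stopping problems \eqref{Vk}--\eqref{recursion}. Your DPP upper bound (conditioning an arbitrary admissible tuple at $\mathcal{F}_{\tau_1+\delta_1}$) is, modulo presentation, the paper's first chain of inequalities: there one unrolls $\widetilde V^{(N)}$ using the definition of $g^{(N)}$ and of $\widetilde V^{(N-1)}$, and observes that iterated suprema dominate the supremum over admissible tuples, giving $V^{(N)}\le\widetilde V^{(N)}$. Where you genuinely diverge is the reverse inequality. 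You propose to complete a two-sided dynamic programming principle by concatenating $\varepsilon$-optimal $(N-1)$-strategies through the shift operator, then induct, and only afterwards deduce optimality of $\underline\tau^*$; the paper instead evaluates $\widetilde V^{(N)}$ at its optimizer and uses the supermartingale property of $e^{-qt}\widetilde V^{(N-1)}(\widetilde S_t)$ to obtain $\widetilde V^{(N)}(s)\le \E_s\left[\sum_{i=1}^N e^{-q\tau^*_i}g(\widetilde S_{\tau^*_i})\right]\le V^{(N)}(s)$, which yields the equality and the optimality of $\underline\tau^*$ simultaneously. The paper's choice is not cosmetic: it bypasses entirely the measurable-selection problem you flag as your main obstacle, since the candidate rules $\tau^*_k$ are entrance times (hence shift-compatible by their very definition in \eqref{tau*k}), and the supermartingale bound replaces any patching of near-optimal continuations. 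Conversely, your route buys a little extra generality: the identity $V^{(N)}=\widetilde V^{(N)}$ is obtained without assuming the single-problem suprema are attained, and your $\varepsilon$-step can be made rigorous by choosing the $\varepsilon$-optimal rules as entrance times into the sets $\{\widetilde V^{(N-1)}\le g^{(N-1)}+\varepsilon\}$, which are state-independent and compose correctly with $\theta_{\tau_1+\delta_1}$. One caveat: your final assertion that optimality of $\underline\tau^*$ then follows ``from the general theory applied to each single problem'' still requires the recursive unwinding (attainment at stage $k$, strong Markov step across the refraction period, then recurse), which is precisely the content of the paper's supermartingale computation, so it should not be treated as automatic.
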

\begin{proof}
Firstly, note that $V^{(N)}(s)\leq\widetilde{V}^{(N)}(s)$, Indeed, by repeating below calculation $N$-times we have
\begin{align}
\widetilde{V}^{(N)}(s)=&\sup\limits_{\tau\in\mathcal{T}}\E_s\left[e^{-q\tau}g^{(N)}(\widetilde{S}_\tau)\right]\nn\\
=&\sup\limits_{\tau\in\mathcal{T}}\left(\E_s\left[e^{-q\tau}g(\widetilde{S}_\tau)\right]+\E_s\left[e^{-q\tau}\E_{\widetilde{S}_{\tau}}\left[e^{-q\delta_1}\widetilde{V}^{(N-1)}(\widetilde{S}_{\delta_1})\right]\right]\right)\nn\\
=&\sup\limits_{\tau\in\mathcal{T}}\left(\E_s\left[e^{-q\tau}g(\widetilde{S}_\tau)\right]+\E_s\left[e^{-q(\tau+\delta_1)}\sup\limits_{\nu\in\mathcal{T}}\E_{\widetilde{S}_{\tau+\delta_1}}\left[e^{-q\nu}g^{(N-1)}(\widetilde{S}_{\nu})\right]\right]\right)\nn\\
\geq&\sup\limits_{(\tau,\nu)\in\mathcal{T}^{(2)}}\left(\E_s\left[e^{-q\tau}g(\widetilde{S}_\tau)\right]+\E_s\left[e^{-q\nu}g^{(N-1)}(\widetilde{S}_\nu)\right]\right)\nn\\
\geq& \ldots \geq V^{(N)}(s).\nn
\end{align}
On the other hand, we
know that the set of stopping times $\{\tau_k^*\}_{k=1}^N$ defined in \eqref{tau*k} is optimal for the single recursion problem \eqref{Vk}. Therefore, we
can write
\begin{align}
\widetilde{V}^{(N)}(s)=&\E_s\left[e^{-q\tau_N^*}g^{(N)}(\widetilde{S}_{\tau_N^*})\right]
=\E_s\left[e^{-q\tau^*_N}g(\widetilde{S}_{\tau^*_N})\right]+\E_s\left[e^{-q\tau^*_N}\E_{\widetilde{S}_{\tau^*_N}}\left[e^{-q\delta_1}\widetilde{V}^{(N-1)}(\widetilde{S}_{\delta_1})\right]\right]\nn\\
\leq&\E_s\left[e^{-q\tau^*_N}g(\widetilde{S}_{\tau^*_N})\right]+\widetilde{V}^{(N-1)}(s)
=\E_s\left[\sum\limits_{i=1}^N e^{-q\tau^*_i}g(\widetilde{S}_{\tau^*_i})\right]\nn
\end{align}
where the above inequality follows from the supermartingale property of $e^{-qt}\widetilde{V}^{(N-1)}(\widetilde{S}_t)$. Summarizing, we get
\begin{align}
V^{(N)}(s)\leq\widetilde{V}^{(N)}(s)\leq\E_s\left[\sum\limits_{i=1}^N e^{-q\tau^*_i}g(\widetilde{S}_{\tau^*_i})\right].\nn
\end{align}
From the definition of $V^{(N)}$ as a supremum over all stopping times separated by the sequence $\{\delta_i\}$ it follows that
\[\E_s\left[\sum\limits_{i=1}^N e^{-q\tau^*_i}g(\widetilde{S}_{\tau^*_i})\right]\leq V^{(N)}(s).\]
This completes the proof.
\end{proof}
We now give a characterization of the set of optimal stopping rules for the Swing put option.
\begin{Thm}\label{optimalswing2}
The $k$-th optimal stopping rule $\tau_k^*$ is a first moment of entering some interval $I_k:=[\mathbf{l}^*_k,\mathbf{u}^*_k]$, i.e.
\begin{align}
\tau_k^*:=\{t\geq\tau_{k-1}^*+\delta_{k-1}:\ \widetilde{S}_t\in I_k\},\label{tau*k_lu}
\end{align}
where $\{\mathbf{l}^*_k\}_{k=1}^N$ and $\{\mathbf{u}^*_k\}_{k=1}^N$ satisfies the \textit{smooth fit} property.
Moreover, the sequence $\{I_k\}_{k=1}^N$ is nested, i.e. $[\mathbf{l}^*_k,\mathbf{u}^*_k]\subseteq [\mathbf{l}^*_{k+1},\mathbf{u}^*_{k+1}]$.
\end{Thm}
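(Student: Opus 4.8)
The plan is to view each recursion \eqref{Vk} as a single optimal stopping problem with continuous payoff $g^{(k)}$ and to re-run the analysis of Section \ref{sec:valuation} problem-by-problem, and then to obtain the nesting by transferring the immediate-exercise property from the $k$-th to the $(k+1)$-th problem. Write $G_k(s):=\widetilde{V}^{(k)}(s)-g^{(k)}(s)\geq 0$, so that $I_k=\{s:\ G_k(s)=0\}$. First I would fix $k$: by Lemma \ref{lem:Vk_convex} the value $\widetilde{V}^{(k)}$ is convex and non-increasing, and the computation in its proof shows the same for $g^{(k)}$, while $\widetilde{V}^{(k)}\geq g^{(k)}\geq g$. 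I would then repeat the argument of Lemma \ref{newlemma}: since $q<0$, for $s$ near $0$ the value strictly dominates $g^{(k)}$ (waiting is amplified by the factor $e^{-q\tau}>1$), for large $s$ immediate exercise is again suboptimal, and the exercise set is nonempty because it contains $I_1=[\mathbf{l}^*_1,\mathbf{u}^*_1]$ by Lemma \ref{lem:Vk_lu}. Hence $I_k$ is a single interval $[\mathbf{l}^*_k,\mathbf{u}^*_k]$, and Theorem \ref{thm:optimal} applied with payoff $g^{(k)}$ yields the entrance-time representation \eqref{tau*k_lu}. The smooth-fit property at $\mathbf{l}^*_k,\mathbf{u}^*_k$ then follows from Proposition \ref{smooth_fit} applied to the $k$-th problem: the regularity-of-$0$ hypotheses depend only on $X$, not on $k$, while the refraction averaging in \eqref{recursion} (recall $\widetilde{X}_{\delta_k}$ has no atoms) regularizes $g^{(k)}$ enough for the local-time argument to carry over.

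For the nesting I would set $\Delta^{(k)}:=\widetilde{V}^{(k)}-\widetilde{V}^{(k-1)}$. Since the reverse inequality $\widetilde{V}^{(k+1)}\geq g^{(k+1)}$ is automatic, the inclusion $I_k\subseteq I_{k+1}$ is equivalent to $G_{k+1}=0$ on $I_k$. I would first record, by induction from $\widetilde{V}^{(0)}=0$, that $\widetilde{V}^{(k)}$ is non-decreasing in $k$, hence $\Delta^{(k)}\geq 0$: if $\widetilde{V}^{(k-1)}\leq\widetilde{V}^{(k)}$ then $g^{(k)}\leq g^{(k+1)}$ through \eqref{recursion}, and taking suprema in \eqref{Vk} preserves the order. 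Subtracting \eqref{recursion} at levels $k+1$ and $k$ (using that the $\delta_i$ are i.i.d.) gives the identity $G_{k+1}-G_k=\Delta^{(k+1)}-\E_{\cdot}[e^{-q\delta}\Delta^{(k)}(\widetilde{S}_\delta)]$, so that on $I_k$, where $G_k=0$, the claim $G_{k+1}=0$ reduces to $\Delta^{(k+1)}(s)\leq\E_s[e^{-q\delta}\Delta^{(k)}(\widetilde{S}_\delta)]$ for $s\in I_k$. Following \cite[Lem. 2.3]{Carmona&Touzi}, I would establish this by showing that $\Delta^{(k+1)}$ is itself the value of a single stopping problem whose payoff is the refracted previous increment $s\mapsto\E_s[e^{-q\delta}\Delta^{(k)}(\widetilde{S}_\delta)]$, and that on $I_k$ this auxiliary problem also exercises immediately, which forces equality there.

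The delicate point is exactly this transfer of the immediate-exercise property across $k$. For the plain put of Section \ref{sec:valuation} the exercise set was automatically an interval because $\widetilde{V}-(K-s)$ is convex, being a convex function minus an affine one; here $g^{(k)}$ is convex but no longer affine, so $G_k$ is only a difference of convex functions and could a priori vanish on a disconnected set, and likewise the sign of $G_{k+1}-G_k$ on $I_k$ is not immediate from convexity. To close both gaps I would pass to the infinitesimal description used in the proof of Proposition \ref{smooth_fit}: in each continuation region $(\mathcal{A}-q)\widetilde{V}^{(j)}=0$, while on the exercise set $(\mathcal{A}-q)g^{(j)}\leq 0$, so both the connectedness of $I_k$ and the inclusion $I_k\subseteq I_{k+1}$ are controlled by the signs of $(\mathcal{A}-q)g^{(k)}$ and of $(\mathcal{A}-q)\E_{\cdot}[e^{-q\delta}\Delta^{(k)}(\widetilde{S}_\delta)]$. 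The crux is therefore to prove that the refracted marginal value is $q$-excessive on $I_k$; this is the decreasing-marginal-value phenomenon of \cite{Carmona&Touzi}, and carrying it out in the one-sided Lévy setting with $q<0$, where the exercise set is a two-sided interval rather than a half-line, is the real work of the proof.
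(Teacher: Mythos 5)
Your reduction of the nesting claim is on target: the identity $G_{k+1}-G_k=\Delta^{(k+1)}-\E_{\cdot}\left[e^{-q\delta}\Delta^{(k)}(\widetilde{S}_{\delta})\right]$ is correct, and showing $\Delta^{(k+1)}\leq\E_{\cdot}\left[e^{-q\delta_1}\Delta^{(k)}(\widetilde{S}_{\delta_1})\right]$ on $I_k$ is exactly the paper's reduction (its process $\overline{V}^{(n)}$ is your discounted $\Delta^{(n)}$). But the proposal stops precisely at the two points that constitute the actual proof. First, connectedness of $I_k$: you rightly observe that the convexity argument of Lemma \ref{newlemma} breaks down because $g^{(k)}$ is not affine below $K$, but your fallback --- reading connectedness off the signs of $(\mathcal{A}-q)g^{(k)}$ --- is not carried out and would not work as stated: the exercise set is \emph{contained in} $\{(\mathcal{A}-q)g^{(k)}\leq 0\}$, but nothing forces that superset to be an interval, and $(\mathcal{A}-q)\widetilde{V}^{(k)}=0$ on the continuation set does not by itself exclude a disconnected exercise set. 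The paper's fix is elementary and probabilistic (see \eqref{keyidea}): if $s_1<s_2$ lie in the exercise region but $(s_1,s_2)$ does not, pick $s_0\in(s_1,s_2)$, let $\tau$ be the hitting time of the part $D_1$ of the exercise region lying to the left of $s_1$; since $g^{(k)}$ is non-increasing and $q<0$ gives $\E_{s_0}\left[e^{-q\tau}\right]>1$, one gets $\widetilde{V}^{(k)}(s_0)\geq g^{(k)}(s_1)\E_{s_0}\left[e^{-q\tau}\right]>g^{(k)}(s_1)\geq g^{(k)}(s_2)$, contradicting continuity of $\widetilde{V}^{(k)}$ as $s_0\uparrow s_2$. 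Negativity of $q$, not convexity, is what drives this step, and it is missing from your argument.

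Second, the supermartingale property of the discounted marginal value (your ``$q$-excessivity of the refracted marginal value''): you explicitly defer it as ``the real work of the proof,'' so the nesting is not established. The paper proves exactly this by induction on $n$, splitting the process $\overline{V}^{(n)}(\widetilde{S}_t)=e^{-qt}\left(\widetilde{V}^{(n)}(\widetilde{S}_t)-\widetilde{V}^{(n-1)}(\widetilde{S}_t)\right)$ along the stopping times $\omega_n^+$ (entry into $I_n$) and $\omega_{n-1}^-$ (exit from $I_{n-1}$): the process stopped at $\omega_n^+$ is a martingale (a difference of discounted values in their common continuation region), the process stopped at $\omega_n^-\wedge\omega_{n-1}^+$ is a supermartingale minus a martingale, and on the piece stopped at $\omega_{n-1}^-$ one uses $\widetilde{V}^{(j)}=g^{(j)}$ on $I_{n-1}$, the i.i.d.\ refraction times and the induction hypothesis to reduce to the $(n-1)$-st statement; smooth fit at the boundaries glues these pieces. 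Identifying $\Delta^{(k+1)}$ as the value of an auxiliary stopping problem (the Carmona--Touzi route you sketch) is plausible but neither needed nor done --- only the one-sided bound is required. As it stands, your text is a correct plan that isolates the right difficulties but proves neither of them.
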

\begin{proof}
From Lemma \ref{lem:Vk_convex} we can conclude that for any $k\leq N$ the functions $\widetilde{V}^{(k)}$ and $g^{(k)}$ are non-increasing and convex.
Hence the optimal stopping times are of the form \eqref{tau*k_lu} for some sets $I_k$ being possibly sum of intervals (or half-lines).
To prove that $I_k$ are true intervals it is sufficient to prove that exercise regions are connected.
To do so assume to the contrary that the $k$th exercise region is not connected. That is,
we assume that there exists $s_1<s_2$ such that
\begin{equation}\label{ac}
\widetilde{V}^{(k)}(s_i)=g^{(k)}(s_i)\qquad\text{and}\qquad \widetilde{V}^{(k)}(s)>g^{(k)}(s)\quad\text{for $s\in(s_1,s_2)$.}\end{equation}
Let $s_0\in (s_1,s_2)$.
Now from the classical Wald-Bellman equation (see e.g. \cite[(2.1.6)-(2.1.7), p. 28]{Peskir}) or straightforward from the $k$th optimization problem \eqref{Vk}
\begin{equation}\label{HJB}
\widetilde{V}^{(k)}(s_0)\geq  \max \left\{g^{(k)}(s_0), \E_{s_0}\left[e^{-q\tau}g^{(k)}(\widetilde{S}_\tau)
\right]\right\}
\end{equation}
that holds true for any stopping time $\tau$.
Denote by $D_1$ the stopping region that is on the left from $s_1$ (jointly with $s_1$).
We choose $\tau=\inf\{t\geq 0: \widetilde{S}_t \in D_1\}$.
Now, as $g^{(k)}$ is non-increasing, observe that
\begin{equation}\label{keyidea}\widetilde{V}^{(k)}(s_0)\geq \E_{s_0}\left[e^{-q\tau}g^{(k)}(\widetilde{S}_\tau)\right]\geq g^{(k)}(s_1)\E_{s_0}\left[e^{-q\tau}\right]>g^{(k)}(s_1) \geq g^{(k)}(s_2). \end{equation}
This produces a contradiction with the fact
$\widetilde{V}^{(k)}$ is continuous and hence $\lim_{s_0\uparrow s_2}\widetilde{V}^{(k)}(s_0)=g^{(k)}(s_2)$.

In order to show that $\{I_k\}_{k=1}^N$ is nested we use inductive arguments. Let us define an auxiliary process
\begin{align*}
\overline{V}^{(k)}(\widetilde{S}_t):=e^{-qt}\left(\widetilde{V}^{(k)}(\widetilde{S}_t)-\widetilde{V}^{(k-1)}(\widetilde{S}_t)\right)\label{Vhat}
\end{align*}
for $k\in\{1,...,N\}$ with $\widetilde{V}^{(0)}(\widetilde{S}_t)=0$. For $k=1$ we have $\overline{V}^{(1)}(\widetilde{S}_0)=V(\widetilde{S}_0)$ and the function $\overline{V}^{(1)}(s)$ for $\widetilde{S}_0=s$ is equal to the value of the single American put option \eqref{Aput}. Therefore, for $k=1$, by the general theory of stopping problems, the process $\overline{V}^{(1)}(\widetilde{S}_t)$ is a supermartingale.
Moreover, from Lemma \ref{lem:Vk_lu} we know that $[\mathbf{l}^*_1,\mathbf{u}^*_1]\subseteq [\mathbf{l}^*_2,\mathbf{u}^*_2]$. Now, assume that for some $n$ the sequence $\{I_k\}_{k=1}^n$ is nested and the supermartingale property of $\overline{V}^{(h)}$ is satisfied for all $h\in\{1,...,n-1\}$, i.e.
\begin{align}
\left[\mathbf{l}^*_1,\mathbf{u}^*_1\right]\subseteq\left[\mathbf{l}^*_2,\mathbf{u}^*_2\right]\subseteq ... \subseteq\left[\mathbf{l}^*_n,\mathbf{u}^*_n\right]\nn
\end{align}
and
\begin{align}
\E_s\left[\overline{V}^{(h)}(\widetilde{S}_t)\right]\leq\overline{V}^{(h)}(s)=\widetilde{V}^{(h-1)}(s)-\widetilde{V}^{(h-2)}(s).\nn
\end{align}
Firstly, by induction we show that $\overline{V}^{(n)}(\widetilde{S}_t)$ is also a supermartingale. We define
\begin{align}
\omega_k^+:=\inf\{t\geq 0: \widetilde{S}_t\in [\mathbf{l}_k^*,\mathbf{u}_k^*]\},\qquad
\omega_k^-:=\inf\{t\geq 0: \widetilde{S}_t\notin [\mathbf{l}^*_k,\mathbf{u}^*_k]\}.\nn
\end{align}
From the general theory of optimal stopping we know that the stopped process $\overline{V}^{(n)}(\widetilde{S}_{t\wedge\omega_n^+})$ is a martingale because $e^{-qt\wedge\omega_n^+}\widetilde{V}^{(h)}(\widetilde{S}_{t\wedge\omega_n^+})$ are martingales for all $h$. Furthermore, the process
$$\overline{V}^{(n)}(\widetilde{S}_{t\wedge\omega_n^-\wedge\omega_{n-1}^+})=e^{-qt\wedge\omega_n^-\wedge\omega_{n-1}^+}\widetilde{V}^{(n)}(\widetilde{S}_{t\wedge\omega_n^-\wedge\omega_{n-1}^+})-e^{-qt\wedge\omega_n^-\wedge\omega_{n-1}^+}\widetilde{V}^{(n-1)}(\widetilde{S}_{t\wedge\omega_n^-\wedge\omega_{n-1}^+})$$
is a stopped supermartingale less a stopped martingale, hence a supermartingale. Finally, $\overline{V}^{(n)}(\widetilde{S}_{t\wedge\omega_{n-1}^-})$ is a supermartingale from the definition of $\widetilde{V}^{(k)}$ and independence of $\delta_k$ and $S$
\begin{align}
\E_s\left[\overline{V}^{(n)}(\widetilde{S}_{t\wedge\omega_{n-1}^-})\right]=&\E_s\left[\widetilde{V}^{(n)}(\widetilde{S}_{t\wedge\omega_{n-1}^-})-\widetilde{V}^{(n-1)}(\widetilde{S}_{t\wedge\omega_{n-1}^-})\right]=\E_s\left[g^{(n)}(\widetilde{S}_{t\wedge\omega_{n-1}^-})-g^{(n-1)}(\widetilde{S}_{t\wedge\omega_{n-1}^-})\right]\nn\\
=& \E_s\left[e^{-q(\delta_1+t\wedge\omega_{n-1}^-)}\left(\widetilde{V}^{(n-1)}(\widetilde{S}_{\delta_1+t\wedge\omega_{n-1}^-})-\widetilde{V}^{(n-2)}(\widetilde{S}_{\delta_1+t\wedge\omega_{n-1}^-})\right)\right]=\E_s\left[\overline{V}^{(n-1)}(\widetilde{S}_{\delta_1+t\wedge\omega_{n-1}^-})\right]\nn\\
\leq & \E_s\left[\overline{V}^{(n-1)}(\widetilde{S}_{\delta_1})\right]=\E_s\left[e^{-q\delta_1}\left(\widetilde{V}^{(n-1)}(\widetilde{S}_{\delta_1})-\widetilde{V}^{(n-2)}(\widetilde{S}_{\delta_1})\right)\right]=g^{(n)}(s)-g^{(n-1)}(s)\nn\\
=&\widetilde{V}^{(n)}(s)-\widetilde{V}^{(n-1)}(s)=\overline{V}^{(n)}(s).\nn
\end{align}
All these cases together with the \textit{smooth fit} property at all $\mathbf{l}^*_k$ and $\mathbf{u}^*_k$ give us the supermartingale property of the process $\overline{V}^{(n)}(\widetilde{S}_t)$. The \textit{smooth fit} is satisfied by the same arguments as the ones presented in Proposition \ref{smooth_fit} for the single American put option.

Now, using proved supermartingale property of $\overline{V}$ we show that $I_{n}\subseteq I_{n+1}$. Indeed, for $s\in[\mathbf{l}^*_n,\mathbf{u}^*_n]$ we have:
\begin{align}
\widetilde{V}^{(n+1)}(s)=&\sup\limits_{\tau\in\mathcal{T}}\E_s\left[e^{-q\tau}g^{(n+1)}(\widetilde{S}_\tau)\right]\leq\sup\limits_{\tau\in\mathcal{T}}\E_s\left[e^{-q\tau}g^{(n)}(\widetilde{S}_\tau)\right]+\sup\limits_{\tau\in\mathcal{T}}\E_s\left[e^{-q\tau}\left(g^{(n+1)}(\widetilde{S}_\tau)-g^{(n)}(\widetilde{S}_\tau)\right)\right]\nn\\
=&\widetilde{V}^{(n)}(s)+\sup\limits_{\tau\in\mathcal{T}}\E_s\left[e^{-q(\tau+\delta_1)}\left(\widetilde{V}^{n}(\widetilde{S}_{\tau+\delta_1})-\widetilde{V}^{(n-1)}(\widetilde{S}_{\tau+\delta_1})\right)\right]=\widetilde{V}^{(n)}(s)+\sup\limits_{\tau\in\mathcal{T}}\E_s\left[\overline{V}^{(n)}(\widetilde{S}_{\tau+\delta_1})\right]\nn\\
\leq & \widetilde{V}^{(n)}(s)+\E_s\left[\overline{V}^{(n)}(\widetilde{S}_{\delta_1})\right]=\widetilde{V}^{(n)}(s)+(g^{(n+1)}(s)-g(s))-(g^{(n)}(s)-g(s))\nn\\
=&\widetilde{V}^{(n)}(s)-g^{(n)}(s)+g^{(n+1)}(s)=g^{(n+1)}(s).\nn
\end{align}
This proves that $\widetilde{V}^{(n+1)}(s)=g^{(n+1)}(s)$ and hence $[\mathbf{l}_{n}^*,\mathbf{u}^*_{n}]\subseteq [\mathbf{l}^*_{n+1},\mathbf{u}^*_{n+1}]$.
\end{proof}

\bigskip
\section{The put-call symmetry}\label{sec:putcall}
Let $\widetilde{X}_t$  be a completely asymmetric L\'evy process. We recall that
 $\widetilde{X}_t$ has the Laplace exponent \eqref{eq:exponent}
 $$\widetilde{\psi}(\phi)=\mu\phi
+\frac{\sigma^2}{2} \phi^{2}+\int_{\mathbb{R}\backslash\{0\}}\big(\mathrm e^{\phi
u}-1-\phi u\mathbbm{1}_{(|u|<1)}\big)\Pi(\diff u),$$
where $\mu=q-\delta -\frac{\sigma^2}{2} - \int_{\mathbb{R}}\big(\mathrm e^{
u}-1- u\mathbbm{1}_{(|u|<1)}\big)\Pi(\diff u)$
by \eqref{esschertransform}. We recall that by $\delta$ we denote a dividend rate.
The process $\widetilde X$ can be identified via its triplet  $(\mu, \sigma, \Pi)$.

In this section we will find a relationship between the values and the exercise regions of  perpetual American call and put options, the so-called
{\it put-call symmetry}.
We denote by
\begin{equation*}
V_p(s, K, q, \delta, \sigma, \Pi) :={\sup\limits_{\tau\in\mathcal{T}} }\E_s\left[e^{-q\tau}\left(K-\widetilde{S}_{\tau}\right)^+ \right] = {\sup\limits_{\tau\in\mathcal{T}}} \E_{(\log s)}\left[e^{-q\tau}\left(K-e^{\widetilde{X}_{\tau}}\right)^+ \right]
 \end{equation*}
the value of the perpetual American put option
and by
\[V_c(s, K, q, \delta, \sigma, \Pi) :={\sup\limits_{\tau\in\mathcal{T}} }\E_s\left[e^{-q\tau}\left(\widetilde{S}_{\tau}-K\right)^+ \right] = {\sup\limits_{\tau\in\mathcal{T}}} \E_{(\log s)}\left[e^{-q\tau}\left(e^{\widetilde{X}_{\tau}}-K\right)^+ \right]
\]
the value of the perpetual American call option.

If the discounting rate $q<0$, the arguments given in Section \ref{sec:valuation} (see also \cite{Battauz}),
show that the perpetual American put option either is never exercised or admits a double continuation region identified by the two critical prices
\begin{equation*}\label{lp}
\mathbf{l}_p^*=\mathbf{l}^*=e^{l_p^*}=e^{l^*} \leq  e^{u^*}=e^{u_p^*} =\mathbf{u}^*=\mathbf{u}_p^*\leq K.
\end{equation*}

%
Let
\begin{align}
v_p(s, K, q, \delta, \sigma, \Pi, l,u)&:=\E_{(\log s)}\left[e^{-q\tau_{l,u}}\left(K-e^{\widetilde{X}_{\tau_{l,u}}}\right)^+ \right]\nn \\
v_c(s, K, q, \delta, \sigma, \Pi, l,u)&:=\E_{(\log s)}\left[e^{-q\tau_{l,u}}\left(e^{\widetilde{X}_{\tau_{l,u}}}-K\right)^+ \right]\nn
\end{align}
for $\tau_{l,u}$ given in \eqref{tau} and define $\widecheck{\Pi}(\diff y):= e^{-y} \Pi(-\diff y).$

\begin{Lem}\label{symmetry_perpetual}  
For all $\log K< l< u $   we have
\begin{align} \label{intervalsimmetry}
v_c(s, K, q, \delta, \sigma, \Pi, l,u) = v_p(K, s, \delta, q, \sigma, \widecheck{\Pi}, \log s + \log K - u, \log s + \log K - l ).
\end{align}
\end{Lem}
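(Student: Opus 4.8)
The plan is to derive \eqref{intervalsimmetry} from two successive transformations of the underlying process: an exponential (Esscher) change of measure that absorbs the factor $e^{\widetilde X_\tau}$ in the call payoff and interchanges the discount rate with the dividend rate, followed by a spatial reflection that turns the call payoff into a put payoff, reflects the stopping interval, and produces the reflected L\'evy measure $\widecheck\Pi$.

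First I would introduce the exponential martingale. Since $\widetilde\psi(1)=q-\delta$ by \eqref{esschertransform}, the process $M_t:=e^{\widetilde X_t-\log s-(q-\delta)t}$ is a positive $\P_{(\log s)}$-martingale of mean one, and I define the tilted measure $\widehat\P$ by $d\widehat\P/d\P|_{\mathcal F_t}=M_t$. Under $\widehat\P$ the process $\widetilde X$ remains a L\'evy process with the same Gaussian coefficient $\sigma$, Laplace exponent $\widehat\psi(\theta)=\widetilde\psi(\theta+1)-\widetilde\psi(1)$, and L\'evy measure $e^u\Pi(\diff u)$. Writing the call expectation with respect to $\widehat\P$ and using $d\P/d\widehat\P|_{\mathcal F_\tau}=s\,e^{-\widetilde X_\tau+(q-\delta)\tau}$ with $\tau=\tau_{l,u}$, I obtain
$$v_c(s,K,q,\delta,\sigma,\Pi,l,u)=s\,\widehat\E_{(\log s)}\bigl[e^{-\delta\tau}e^{-\widetilde X_\tau}(e^{\widetilde X_\tau}-K)^+\bigr].$$
Here the hypothesis $\log K<l$ is exactly what is needed: on $\{\tau<\infty\}$ we have $\widetilde X_\tau\in[l,u]$, hence $e^{\widetilde X_\tau}\ge e^l>K$, so $e^{-\widetilde X_\tau}(e^{\widetilde X_\tau}-K)^+=1-Ke^{-\widetilde X_\tau}$ and $s\,(1-Ke^{-\widetilde X_\tau})=s-e^{\log s+\log K-\widetilde X_\tau}$, which becomes the put payoff $(s-e^{Y_\tau})^+$ after the reflection below.

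Next I would set $c:=\log s+\log K$ and define the reflected process $Y_t:=c-\widetilde X_t$, so that $Y_0=\log K$ and $\{\widetilde X_t\in[l,u]\}=\{Y_t\in[c-u,c-l]\}$; consequently $\tau_{l,u}$ is also the entrance time of $Y$ into $[c-u,c-l]=[\log s+\log K-u,\ \log s+\log K-l]$. A constant shift does not alter the generating triplet, and reflecting $\widetilde X$ (whose $\widehat\P$-L\'evy measure is $e^u\Pi(\diff u)$) sends its L\'evy measure to $e^{-y}\Pi(-\diff y)=\widecheck\Pi(\diff y)$ while preserving $\sigma$. To confirm that $Y$ is exactly the process whose value function $v_p$ is being evaluated, I would compute its Laplace exponent $\widecheck\psi(\theta)=\widehat\psi(-\theta)=\widetilde\psi(1-\theta)-\widetilde\psi(1)$ and verify the required normalization $\widecheck\psi(1)=\widetilde\psi(0)-\widetilde\psi(1)=\delta-q$, which is precisely \eqref{esschertransform} with the roles of $q$ and $\delta$ interchanged. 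Substituting $e^{Y_\tau}$ into the previous display then gives
$$v_c(s,K,q,\delta,\sigma,\Pi,l,u)=\widehat\E_{(\log s)}\bigl[e^{-\delta\tau}(s-e^{Y_\tau})^+\bigr]=v_p(K,s,\delta,q,\sigma,\widecheck\Pi,\ \log s+\log K-u,\ \log s+\log K-l),$$
since $Y$ starts at $\log K$ under $\widehat\P$, has strike $s$, discount rate $\delta$, dividend rate $q$, triplet $(\,\cdot\,,\sigma,\widecheck\Pi)$, and enters the reflected interval at $\tau$. (As a sanity check, reflection exchanges spectral negativity and positivity, consistently mapping the call for a spectrally negative $\widetilde X$ to the put for a spectrally positive $Y$.)

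The step that requires the most care is the justification of the change of measure up to the possibly infinite stopping time $\tau_{l,u}$: the Radon--Nikodym identity is immediate on the bounded horizon $\tau\wedge t$, and I would pass to the limit $t\to\infty$ using the convention $e^{-q\tau}\cdot 0=0$ together with the finiteness in \eqref{condition1} and the standing assumption that $\Phi(q)$ (equivalently the relevant exponent for $Y$) is well defined, so that the contribution of $\{\tau=\infty\}$ vanishes on both sides. Everything else is bookkeeping of the triplet under the two transformations; the only genuinely substantive ingredients are the martingale condition \eqref{esschertransform} and the in-the-money observation forced by $\log K<l$.
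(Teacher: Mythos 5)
Your proof is correct, and it reaches \eqref{intervalsimmetry} by a genuinely different route than the paper. The paper's own proof stays entirely at the level of explicit fluctuation identities: it first writes the call value $v_c$ in closed form via the scale function and resolvent (the call analogue of Lemma \ref{v_value}), introduces the dual process $\widecheck{X}_t=-(X_t-\log s)+\log K$ with exponent $\widecheck{\psi}(\phi)=\psi(1-\phi)-\psi(1)$, and then matches the two closed-form expressions term by term using $\mathring{\Phi}(\widecheck{q})=\Phi(q)-1$, $W^{(q)}(x)=e^{x}\mathring{W}^{(\mathring{q})}(x)$ and $r^{(q)}(x,y)=e^{x-y}\,\mathring{r}^{(\mathring{q})}(x,y)$ (Kyprianou, Lemma 8.4); this argument is tied to the completely asymmetric setting, since the closed formulas are. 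You instead realize the same underlying duality probabilistically: the Esscher change of measure with density $e^{\widetilde{X}_t-\log s-(q-\delta)t}$ (a mean-one martingale precisely by \eqref{esschertransform}), justified at the possibly infinite time $\tau_{l,u}$ by optional stopping on $\{\tau\leq t\}$ plus monotone convergence of nonnegative quantities, followed by the reflection $Y_t=\log s+\log K-\widetilde{X}_t$. Your bookkeeping is accurate (Gaussian part unchanged, L\'evy measure $e^{u}\Pi(\diff u)$ after tilting and $\widecheck{\Pi}(\diff y)=e^{-y}\Pi(-\diff y)$ after reflection), the hypothesis $\log K<l$ enters exactly where you place it (removing both positive parts), and the normalization check $\widecheck{\psi}(1)=\delta-q$ is the decisive step identifying the law of $Y$ with the process underlying $v_p(K,s,\delta,q,\sigma,\widecheck{\Pi},\cdot,\cdot)$. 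What your route buys: it never invokes one-sided jumps or the formulas of Lemma \ref{v_value}, so it proves the identity for an arbitrary L\'evy process satisfying \eqref{esschertransform}, and it exhibits the two sides as literally the same expectation under two equivalent descriptions. What the paper's route buys: the intermediate scale-function, resolvent and $\mathring{\Phi}(\widecheck{q})=\Phi(q)-1$ identities, which are exactly the relations later exploited (e.g.\ in Theorem \ref{dualsym} and the numerical section) to map the optimal boundaries of the put to those of the call in computable form.
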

\begin{proof} We prove the result for $\widetilde X=X$ being spectrally negative. The proof for the spectrally positive case is similar.
When $X$ is spectrally negative, similar arguments to those employed in Lemma  \ref{v_value}  show that for $\log K < l$ we have
\begin{align*}
v_c(e^x, K, q, \delta, \sigma, \Pi, l,u) = &\left(e^x-K\right)\mathbbm{1}_{(x\in [l,u])}+\left(e^l-K \right)e^{-\Phi(q)(l-x)}\mathbbm{1}_{(x<l)}\label{eq:call_v_value}\\
&+\Bigg\{\int_0^\i\int_{(0,\i)} e^{-\Phi(q)(l-u+y)\vee 0}\left(e^{l\vee (u-y)}-K\right)r^{(q)}(x-u,z)\Pi(-z-\diff y)\diff z\nn\\
&\qquad+\left(e^u-K\right)\frac{\sigma^2}{2}\left(\Wp(x-u)-\Phi(q)W^{(q)}(x-u)\right)\Bigg\}\mathbbm{1}_{(x>u)}\nn.
\end{align*}
Denote
\begin{equation*}\label{checkX}
\widecheck{X}_t= -(X_t - \log s) +\log K\end{equation*}
and $\widecheck x = \log K, \widecheck K = e^x, \widecheck q = \delta, \widecheck l =x + \log K - u, \widecheck u = x + \log K - l $.
Note that $\widecheck{X}_t$ is the spectrally positive process starting from $\widecheck x$ and defined by the triplet $(\widecheck\mu, \sigma, \widecheck{\Pi})$, with $\widecheck\mu = -\mu-\sigma^2$.
As shown in the proof of Lemma 2.1 in \cite{FajardoMordecki}, its characteristic exponent $\widecheck \psi$ is related to $\psi$ by the following identity $\widecheck \psi(\phi) = \psi(1-\phi) -\psi(1)$.
The process $\mathring{X}_t:=\widehat{\widecheck{X}}_t=(-\widecheck X)$ is spectrally negative with the Laplace exponent  $\mathring{\psi}(\phi)=\widecheck \psi (-\phi)$, which
right continuous inverse function $\mathring{\Phi}$ is given by
\begin{align}
\mathring{\Phi}(\widecheck q) = \Phi(q)-1.\nn
\end{align}
\noindent Indeed,
note that $\mathring{\psi}(\mathring{\Phi}(\widecheck q))=\widecheck \psi(-\mathring{\Phi}(\widecheck q))= \widecheck \psi (- ( \Phi(q)-1)) = \widecheck \psi (1-\Phi(q)) = \psi(\Phi(q)) - \psi(1) = q - (q-\delta) = \delta = \widecheck q.$

Finally,  Lemma 8.4 of 
\cite{KIntr} gives us that the first scale function $\mathring W^{(q)} (x)$ of $\mathring{X}_t$ satisfies
\begin{align}
W^{(q)}(x) =e^x \mathring W^{(q - \psi(1))} (x) = e^{x} \mathring W^{(\mathring q )} (x) \qquad \quad \Wp(x) =  e^{x} \mathring W^{(\mathring q )} (x) + e^x W'^{(\mathring q )} (x)\nn
\end{align}
and its resolvent density $\mathring r^{(\mathring q)}  (x, y)$ satisfies
\begin{align}
r^{(q)}(x,y) = e^{(-\mathring \Phi (\mathring q) - 1)y}  \,  e^{x} \,  \mathring W^{(\mathring q )} (x) -  e^{x - y} \,  \mathring W^{(\mathring q )} ( x - y) =  e^{x - y} \,  \mathring r^{(\mathring q)}  (x, y).\nn
\end{align}
Using the above relations and Lemma \ref{v_value} we can straightforward derive equation \eqref{intervalsimmetry}.
Thus the assertion of this lemma holds true.
\end{proof}

Lemma \ref{symmetry_perpetual} immediately produces the following {\it put-call symmetry for perpetual American options}.
\begin{Thm}\label{dualsym} 
Assume that the stopping region of the perpetual put option is not empty, with optimal stopping boundaries
$\mathbf{l}_p^*=e^{l_p^*}$ and $ \mathbf{u}_p^*=e^{u_p^*}$.
Then,
\begin{equation*}
 \label{perpetualsimmetry}
V_c(s, K, q, \delta, \sigma, \Pi) = V_p(K, s, \delta, q, \sigma, \widecheck{\Pi}).
\end{equation*}
Moreover, the American call option admits a double continuation
region with optimal stopping boundaries $\mathbf{l}_c^*=e^{l_c^*}, \mathbf{u}_c^*=e^{u_c^*}$, such that
$$\mathbf{l}_c^*\mathbf{u}_p^*  = \mathbf{l}_p^* \mathbf{u}_c^*  = sK$$
or equivalently
$$ l^*_c  + u^*_p  = l^*_p  + u^*_c  = \log s + \log K.$$
\end{Thm}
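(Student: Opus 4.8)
The plan is to lift the interval-level identity of Lemma \ref{symmetry_perpetual} to the level of value functions by taking suprema on both sides, and then to read off the boundary relation from the affine involution that relates the two stopping intervals.

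First I would record that the perpetual American call inherits the same structural description as the put. Repeating the argument of Lemma \ref{newlemma}, with ``non-increasing'' replaced by ``non-decreasing'', the map $s\mapsto V_c(s,K,q,\delta,\sigma,\Pi)$ is convex, non-decreasing and dominates the payoff $(s-K)^+$; moreover, since $q<0$, the option is strictly valuable below the strike, so $V_c(s)>0=(s-K)^+$ for $s<K$ and the stopping set is contained in $[K,\infty)$. Convexity forces $V_c$ and the affine payoff $s-K$ to agree on at most an interval, so the stopping set is either a bounded interval $[\mathbf{l}_c^*,\mathbf{u}_c^*]\subseteq[K,\infty)$ or a half-line, and in either case
\begin{equation*}
V_c(s,K,q,\delta,\sigma,\Pi)=\sup_{\log K\leq l\leq u} v_c(s,K,q,\delta,\sigma,\Pi,l,u).
\end{equation*}
That the stopping set is bounded above, i.e.\ a genuine interval rather than a half-line, is precisely what the hypothesis ``the stopping region of the perpetual put is non-empty'' will deliver through the symmetry.

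Next I would apply Lemma \ref{symmetry_perpetual} termwise. For $\log K<l<u$ set $\tilde l:=\log s+\log K-u$ and $\tilde u:=\log s+\log K-l$; then $(l,u)\mapsto(\tilde l,\tilde u)$ is an affine involution carrying $\{\log K<l<u\}$ bijectively onto $\{\tilde l<\tilde u<\log s\}$, and the crucial observation is that the constraint $l\geq\log K$ (call exercised above its strike $K$) corresponds exactly to $\tilde u\leq\log s$ (dual put exercised below its strike $\widecheck K=s$), so the two supremum domains coincide under the map. Along this bijection Lemma \ref{symmetry_perpetual} gives $v_c(s,K,q,\delta,\sigma,\Pi,l,u)=v_p(K,s,\delta,q,\sigma,\widecheck\Pi,\tilde l,\tilde u)$. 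Taking the supremum over $(l,u)$, using the previous display on the left and Theorem \ref{thm:optimal} for the dual put on the right (whose stopping region is assumed non-empty, so the maximizer $(l_p^*,u_p^*)$ exists and $\sup_{\tilde l\leq\tilde u\leq\log s}v_p=V_p(K,s,\delta,q,\sigma,\widecheck\Pi)$), yields the value identity $V_c(s,K,q,\delta,\sigma,\Pi)=V_p(K,s,\delta,q,\sigma,\widecheck\Pi)$.

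The boundary relation then comes for free from the involution: since the per-interval values agree, so do their maximizers, hence the call optimizer $(l_c^*,u_c^*)$ is the preimage of the put optimizer, giving $l_p^*=\log s+\log K-u_c^*$ and $u_p^*=\log s+\log K-l_c^*$. Rearranging yields $l_c^*+u_p^*=l_p^*+u_c^*=\log s+\log K$, and exponentiating gives $\mathbf{l}_c^*\mathbf{u}_p^*=\mathbf{l}_p^*\mathbf{u}_c^*=sK$; in particular $[\mathbf{l}_c^*,\mathbf{u}_c^*]$ is bounded because $[\mathbf{l}_p^*,\mathbf{u}_p^*]$ is, establishing the double continuation region for the call. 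The main obstacle I anticipate is the first step: the inequality ``$\geq$'' in the value identity is immediate by restricting to interval rules, but the reverse ``$\leq$'' requires that the call's \emph{optimal} rule really is an entrance time into an interval, and one must verify carefully that the two supremum domains match under the involution so that no spurious $(l,u)$ pairs are gained or lost at the boundary $l=\log K$.
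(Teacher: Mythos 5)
Your proposal is correct and follows essentially the paper's own route: both arguments first establish, by repeating the Lemma \ref{newlemma} reasoning, that the call's stopping set is an interval (or half-line) in $[K,\infty)$, and then transfer optimality through the symmetry of Lemma \ref{symmetry_perpetual} under the affine involution $(l,u)\mapsto(\log s+\log K-u,\,\log s+\log K-l)$. The only difference is the final step — the paper differentiates the symmetry identity and matches first-order conditions in $(l,u)$, whereas you match suprema over the corresponding domains and let the bijection carry maximizer to maximizer — which is the same transfer phrased slightly more robustly, since it does not require the maximizers to be interior critical points of a differentiable function.
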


\begin{Rem}\rm
A similar result for the Black-Scholes model is derived in 
\cite{DeDonno}.
For the standard case $q> 0$, Theorem \ref{dualsym}   is proved by
\cite{CarrChesney} in a diffusive market and by 
\cite{Schroder}
 in a general semimartingale model.   A precise description of the duality in the L\'evy market for $q>0$ can be found in 
 \cite[Cor. 2.2]{FajardoMordecki}.
A comprehensive review of the put-call duality for American options is given in 
\cite{Detemple}.
\end{Rem}

\begin{proof}
Using similar arguments like in the proof of Lemma \ref{newlemma} one can conclude that
for the call American option the stopping region has to be also an interval or a half-line.
Now, Lemma \ref{symmetry_perpetual} implies that
\begin{eqnarray*}
\frac{\partial}{\partial l}v_c(s, K, q, \delta, \sigma, \Pi, l,u) = - \frac{\partial}{\partial u}
v_p(K, s, \delta, q, \sigma, \widecheck{\Pi}, \log s + \log K - u, \log s + \log K - l ), \\
\frac{\partial}{\partial u}v_c(s, K, q, \delta, \sigma, \Pi, l,u)  = - \frac{\partial}{\partial l}  v_p(K, s, \delta, q, \sigma, \widecheck{\Pi}, \log s + \log K - u, \log s + \log K - l).
\end{eqnarray*}
All above derivatives are equal zero (hence maximizing appropriate option values)
if and only if  $ l^*_p =  -u^*_c +\log s + \log K
$, $   u^*_p  = -  l^*_c + \log s + \log K.
$
This completes the proof.
\end{proof}

A put-call symmetry result can be derived also  for Swing options as a consequence of the previous theorem.
Let
\begin{align}
V^{(N)}_p(s, K, q, \delta, \sigma, \Pi, \underline{\tau}^*_p)&:=\sup\limits_{\underline{\tau}\in\mathcal{T}^{(N)}}
\E_{(\log s)}\left[\sum\limits_{i=1}^N e^{-q\tau_i}\left(K-e^{\widetilde{X}_{\tau_i}}\right)^+\right]=\E_{(\log s)}\left[\sum\limits_{i=1}^N e^{-q\tau^*_{i,p}}\left(K-e^{\widetilde{X}_{\tau^*_{i,p}}}\right)^+\right],\nn\\
V^{(N)}_c(s, K, q, \delta, \sigma, \Pi, \underline{\tau}^*_c)&:=\sup\limits_{\underline{\tau}\in\mathcal{T}^{(N)}}\E_{(\log s)}\left[\sum\limits_{i=1}^N e^{-q\tau_i}\left(e^{\widetilde{X}_{\tau_i}}-K\right)^+ \right]=\E_{(\log s)}\left[\sum\limits_{i=1}^N e^{-q\tau^*_{i,c}}\left(e^{\widetilde{X}_{\tau^*_{i,c}}}-K\right)^+ \right],\nn
\end{align}
be put and call American options, respectively, where $\mathcal{T}^{(N)}$ is defined in \eqref{Tn} and
$$\underline{\tau}^*=(\tau^*_{1},...,\tau^*_{N})=\underline{\tau}_p^*=(\tau^*_{1,p},...,\tau^*_{N, p}),\qquad \underline{\tau}_c^*=(\tau^*_{1,c},...,\tau^*_{N, c})$$
are the corresponding  optimal stopping rules.
From Theorems \ref{optimalswing} and \ref{optimalswing2} it follows that
\begin{align}
\tau_i^*=\tau^*_{i,p}=\inf\{t\geq 0:\ \widetilde{X}_t\in[l^*_i,u^*_i]\},\nn
\end{align}
where $\mathbf{l}^*_i=e^{l^*_i}=e^{l^*_{i,p}} \leq  e^{u_i^*} = e^{u_{i,p}^*}=\mathbf{u}^*_i$ for $\mathbf{l}^*_i$ and $\mathbf{u}^*_i$ defined in Theorem \ref{optimalswing2}.
\begin{Cor}
The following put-call symmetry holds
\begin{align*} \label{swing_symmetry}
V^{(N)}_c(s, K, q, \delta, \sigma, \Pi, \underline{\tau}^*_c) = V^{(N)}_p(K, s, \delta, q, \sigma, \widecheck{\Pi}, \underline{\tau}^*_p)
\end{align*}
and
\begin{equation}
\label{callentrance}
\tau^*_{i,c}:=\inf\{t\geq 0:\widetilde{X}_t\in[l^*_{i,c},u^*_{i,c}]\}
\end{equation}
with
\begin{align}
&l^*_{i,c}:=\log s + \log K - u_{i,p}^*,\nn\\
&u^*_{i,c}:=\log s + \log K - l_{i,p}^*.\nn
\end{align}
\end{Cor}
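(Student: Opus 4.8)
The plan is to lift the single-exercise duality to the multiple-exercise setting by a global Esscher change of measure, the $N$-fold analogue of the transformation already used in Lemma~\ref{symmetry_perpetual} and Theorem~\ref{dualsym}. Writing out the definition, $V^{(N)}_c(s,K,q,\delta,\sigma,\Pi,\underline{\tau}^*_c)=\sup_{\underline{\tau}\in\mathcal{T}^{(N)}}\E_{(\log s)}[\sum_{i=1}^N e^{-q\tau_i}(e^{\widetilde{X}_{\tau_i}}-K)^+]$, I would factor each summand as $(e^{\widetilde{X}_{\tau_i}}-K)^+=e^{\widetilde{X}_{\tau_i}}(1-Ke^{-\widetilde{X}_{\tau_i}})^+$ and absorb the growth factor $e^{\widetilde{X}_{\tau_i}}$ into a change of measure. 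By the martingale condition \eqref{esschertransform}, $\widetilde{\psi}(1)=q-\delta$, so $M_t:=\exp\{\widetilde{X}_t-\log s-(q-\delta)t\}$ is a mean-one martingale under $\P_{(\log s)}$ and defines an equivalent measure $\widetilde{\P}$.

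Under this measure the discount is converted as $e^{-q\tau_i}e^{\widetilde{X}_{\tau_i}}=s\,M_{\tau_i}e^{-\delta\tau_i}$, so term by term $\E_{(\log s)}[e^{-q\tau_i}(e^{\widetilde{X}_{\tau_i}}-K)^+]=s\,\widetilde{\E}[e^{-\delta\tau_i}(1-Ke^{-\widetilde{X}_{\tau_i}})^+]$. Introducing the reflected process $\widecheck{X}_t=-(\widetilde{X}_t-\log s)+\log K$ --- spectrally positive when $\widetilde{X}$ is spectrally negative and conversely --- one has $\widecheck{X}_0=\log K$ and the pointwise identity $(1-Ke^{-\widetilde{X}_{\tau_i}})^+=\tfrac1s(s-e^{\widecheck{X}_{\tau_i}})^+$, which cancels the prefactor $s$. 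The exponent computation from the proof of Lemma~\ref{symmetry_perpetual}, namely $\widecheck{\psi}(\phi)=\psi(1-\phi)-\psi(1)$ with the resulting measure $\widecheck{\Pi}(\diff y)=e^{-y}\Pi(-\diff y)$, identifies $\widecheck{X}$ under $\widetilde{\P}$ as the completely asymmetric process with triplet $(\widecheck{\mu},\sigma,\widecheck{\Pi})$ for which the roles of $q$ and $\delta$ are interchanged. Summing over $i$ and taking the supremum therefore yields $V^{(N)}_c(s,K,q,\delta,\sigma,\Pi,\underline{\tau}^*_c)=V^{(N)}_p(K,s,\delta,q,\sigma,\widecheck{\Pi},\underline{\tau}^*_p)$.

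It remains to identify the optimal regions and the stopping rule \eqref{callentrance}. Here I would invoke Theorems~\ref{optimalswing} and~\ref{optimalswing2}: the Swing put with data $(K,s,\delta,q,\widecheck{\Pi})$ exercises at the nested intervals $[\mathbf{l}^*_{i,p},\mathbf{u}^*_{i,p}]$, and since the change of measure above is implemented by the deterministic reflection $\widecheck{X}=-\widetilde{X}+\log s+\log K$, the event $\{\widecheck{X}_t\in[l^*_{i,p},u^*_{i,p}]\}$ coincides with $\{\widetilde{X}_t\in[\log s+\log K-u^*_{i,p},\,\log s+\log K-l^*_{i,p}]\}$. This exhibits the call exercise region as the interval obtained by reflecting the put interval through the midpoint $\tfrac12(\log s+\log K)$, giving $l^*_{i,c}=\log s+\log K-u^*_{i,p}$ and $u^*_{i,c}=\log s+\log K-l^*_{i,p}$, exactly the entrance time \eqref{callentrance}.

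The step I expect to be the main obstacle is the global change of measure across the $N$ refraction-separated exercise times and the attendant preservation of the admissible class $\mathcal{T}^{(N)}$ from \eqref{Tn}. Because the refraction periods $\{\delta_i\}$ are independent of the driving process and $\widetilde{\P}$ is equivalent to $\P$ on each $\mathcal{F}_{\tau_i}$, the nested-and-shifted structure of $\mathcal{T}^{(N)}$ is left invariant, and the reflection $\widecheck{X}$ generates the same filtration as $\widetilde{X}$, so the supremum over call strategies maps bijectively onto the supremum over put strategies; making this rigorous --- via optional stopping of $M$ at each $\tau_i$, using the finiteness \eqref{condition1} and the convention $e^{-q\tau}\cdot 0=0$ for $\tau_i=\infty$ --- is the only point requiring genuine care, the single-exercise identities of Lemma~\ref{symmetry_perpetual} supplying everything else.
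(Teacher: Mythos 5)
Your proposal is correct, but it follows a genuinely different route from the paper. The paper's own proof has two steps: first it shows that the exercise regions of the call Swing option are intervals by re-running the connectedness argument \eqref{keyidea} from the proof of Theorem \ref{optimalswing2} with the roles of $s_1$ and $s_2$ exchanged (the call payoffs $g^{(k)}$ being non-decreasing), and then it applies the fixed-interval, scale-function identity of Lemma \ref{symmetry_perpetual} — which holds for every pair $l<u$ — term by term, using the fact that the Swing value decomposes into a sum of single-exercise values at the optimal times. You instead prove the value identity by a global Esscher change of measure (the classical Schroder/Fajardo--Mordecki duality), converting each summand $e^{-q\tau_i}(e^{\widetilde X_{\tau_i}}-K)^+$ into a discounted put payoff of the reflected process $\widecheck X$ under the new measure, and you then obtain the interval structure of the optimal call rule \emph{for free}: since the strategy classes coincide and values match strategy by strategy, the optimal put entrance times (from Theorems \ref{optimalswing} and \ref{optimalswing2} applied in the dual market) transport to optimal call entrance times into the reflected intervals, with no separate connectedness argument needed. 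What each approach buys: yours avoids any reliance on the explicit scale-function formulas, so in principle it works beyond completely asymmetric processes, and it handles the $N$-fold sum in one stroke; the price is exactly the point you flag — optional stopping of the Esscher martingale at possibly infinite stopping times (handled by truncating at $\{\tau_i\le t\}$ and monotone convergence, using the convention that the payoff vanishes on $\{\tau_i=\infty\}$) and the fact that the stopping times in \eqref{Tn} live in a filtration enlarged by the refraction periods, where the independence of the $\delta_i$ from $\widetilde X$ guarantees both that $M$ remains a martingale and that the law and independence of the $\delta_i$ are preserved under $\widetilde{\P}$. The paper's route stays entirely inside the fluctuation-theory toolkit it has already built and needs no change of measure, but must argue the call-region structure separately. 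Both arguments are sound; yours is arguably more self-contained at the level of the value identity, the paper's more economical given its earlier lemmas.
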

\begin{proof}
To prove that stopping regions for the call Swing option are of the form of \eqref{callentrance} one can use similar idea like the one given in  \eqref{keyidea} but
in this case one has to exchange roles of $s_1$ and $s_2$ as payoff functions $g^{(k)}$ are all non-decreasing.
Now, the result follows immediately from Lemma \ref{symmetry_perpetual} as symmetry holds for all $l,u$ such that $\log K<l<u$, given also that the Swing option is a sum of single options.
\end{proof}

\bigskip
\section{Numerical analysis}\label{sec:examples}
\subsection {Black-Scholes model}
Let $$X_t = x+\mu t + \sigma W_t$$ be a linear Brownian motion where $W_t$ is a Brownian motion, $\mu = q-\delta-\frac{\sigma^2}{2}$ is a drift under a martingale measure
(see \eqref{esschertransform}), $\sigma>0$ is a volatility and $x=\log s$ is the starting position of $X$. As the linear Brownian motion do not have any jumps, it is the only L\'evy process which is at the same time spectrally negative and spectrally positive.
This model corresponds the seminal Black-Scholes market model.

For the linear Brownian motion $X_t$ the Laplace exponent and the scale functions are given by
\begin{align}
\psi(\phi)=\mu\phi+\frac{1}{2}\sigma^2\phi^2\nn
\end{align}
and
\begin{align}
W^{(q)}(\phi)&=\frac{1}{\Xi\sigma^2}\left(e^{(\Xi-\frac{\mu}{\sigma^2})\phi}-e^{(-\Xi-\frac{\mu}{\sigma^2})\phi}\right)=\frac{2}{\Xi\sigma^2}e^{-\frac{\mu}{\sigma^2}\phi}\sinh(\Xi\phi),\nn\\
\end{align}
where
\begin{align}
\Xi=\frac{\sqrt{\mu^2+2q\sigma^2}}{\sigma^2}.\nn
\end{align}

Using Lemma \ref{v_value}, we now give the value of the American put option with the negative discounting factor $q<0$. Firstly, we calculate the right inverse Laplace transform
\begin{align}
\Phi(q)=-\frac{\mu}{\sigma^2}+\Xi.\nn
\end{align}
It is well-defined when $\mu^2+2q\sigma^2\geq 0$ and for $q<0$ it takes negative value. Thus, the American put option with the negative rate admits the double continuation region and its value for the Brownian motion reduces to:
\begin{align}
v(x,l^*,u^*)=&\left(K-e^x\right)\mathbbm{1}_{(x\in [l^*,u^*])}+\left(K-e^{l^*}\right)e^{(\frac{\mu}{\sigma^2}-\Xi)(l^*-x)}\mathbbm{1}_{(x<l^*)}\nn\\
&+\left(K-e^{u^*}\right)e^{-(\frac{\mu}{\sigma^2}+\Xi)(x-u^*)}\mathbbm{1}_{(x>u^*)},\nn
\end{align}
where the optimal levels $l^*$ and $u^*$
could be identified using either first-order condition or the smooth fit principle given in Proposition \ref{smooth_fit}, and
they are given by
\begin{align}
l^*=&\log\left( K\frac{\Phi(q)}{\Phi(q)-1}\right),\nn\\
u^*=&\log\left( K\frac{2\Xi-\Phi(q)}{2\Xi-\Phi(q)+1}\right).\nn
\end{align}
\begin{Rem}\rm
The optimal level $l^*$ solves
\begin{equation*}\label{lder} \frac{\partial}{\partial l} v(x,l^*,u^*)=-e^{l^*}  e^{-\Phi(q)(l^*-x)} -\Phi(q) (K-e^{l^*}) e^{-\Phi(q)(l^*-x)} = 0
\end{equation*}
because it maximizes the value function of the American put option.
The above equation is equivalent to
$$ \Phi(q) (K-e^{l^*}) = -e^{l^*}
$$
which does not have solution when $q\geq 0$. This means that for $q\geq 0$ we always have a single continuation region; see Table \ref{Table1}.
In fact, 
\cite{AliliKyprianou} (see also 
\cite[Th. 9.2]{KIntr}) proved that
this stopping region is $(-\infty,u^*]$, hence the continuation region $\mathcal{C}$ equals $(u^*,\infty)$ in this case.
\end{Rem}
\begin{figure}[!ht]
\centering
\includegraphics[width=0.6\textwidth]{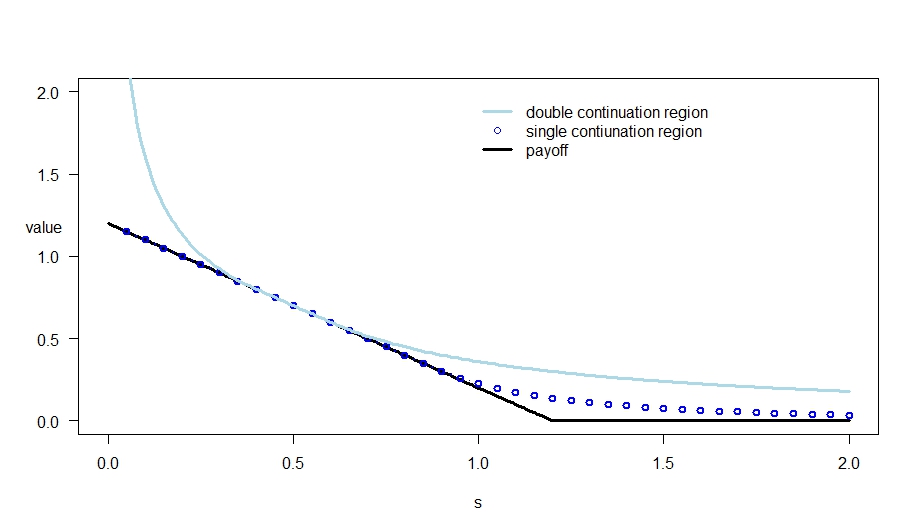}
\caption{\footnotesize{The American put option value for different discounting factor: $q=-0.01$ with double continuation region and $q=0.01$ with single continuation region. Parameters: $\sigma=0.2, \delta=-0.06$.}}
\label{fig:Aput}
\end{figure}

In Table \ref{Table1}
 we give a summary for all possible continuation regions with respect to parameters for the American put option in the Brownian motion case. The put-call symmetry yields an analogous result for the American call option, which we describe  in Table \ref{Table2}.
Note that in the case of the American call option one can derive only single stopping region when $q<0$ and $\delta >0$.
This case corresponds to the conditions considered by 
\cite{Xia&Zhou}.
\begin{table}[!htbp]
\centering
\caption{The continuation region for the American put option with respect to discounting rate.}
\label{Table1}
\begin{tabular}{|l|l|l|l|}
\toprule
\multicolumn{3}{|c|}{\bf{discounting rate}} & \multicolumn{1}{|c|}{\bf{continuation region $\mathcal{C}$}} \\
\midrule
\multicolumn{3}{|l|}{$q>0$} & single continuation region; $\mathcal{C}=(u^*,\i)$\\
\hline
\multirow{2}{*}{$q=0$} & \multicolumn{2}{|l|}{$\mu>0$} & single continuation region; $\mathcal{C}=(u^*,\infty)$\\
\cline{2-4}
& \multicolumn{2}{|l|}{$\mu\leq 0$} &no early exercise; $\mathcal{C}=\mathbb{R}$\\
\hline
\multirow{5}{*}{$q<0$} & \multirow{2}{*}{$\mu^2+2q\sigma^2>0$} & $\mu>0$ & double continuation region; $\mathcal{C}=(-\i,l^*)\cup (u^*,\i)$\\
\cline{3-4}
& & $\mu\leq 0$ & no early exercise; $\mathcal{C}=\mathbb{R}$\\
\cline{2-4}
& \multirow{2}{*}{$\mu^2+2q\sigma^2=0$} & $\mu>0$ &double continuation region; $l^*=u^*$; $\mathcal{C}=(-\i,l^*)\cup (l^*,\i)$\\
\cline{3-4}
& & $\mu\leq 0$ & no early exercise; $\mathcal{C}=\mathbb{R}$\\
\cline{2-4}
& \multicolumn{2}{|l|}{$\mu^2+2q\sigma^2<0$} & no early exercise; $\mathcal{C}=\mathbb{R}$\\
\bottomrule
\end{tabular}
\end{table}

\begin{table}[!htbp]
\centering
\caption{The continuation region for the American call option with respect to discounting rate.}
\label{Table2}
\begin{tabular}{|l|l|l|l|l|}
\toprule
\multicolumn{4}{|c|}{\bf{discounting rate}} & \multicolumn{1}{|c|}{\bf{continuation region $\mathcal{C}$}} \\
\midrule
\multirow{2}{*}{$q\geq 0$} & \multicolumn{3}{|l|}{$\delta>0$} & single continuation region; $\mathcal{C}=(-\i,l^*)$\\
\cline{2-5}
& \multicolumn{3}{|l|}{$\delta\leq 0$} &no early exercise; $\mathcal{C}=\mathbb{R}$\\
\hline
\multirow{7}{*}{$q<0$} & \multicolumn{3}{|l|}{$\delta>0$} & single continuation region; $\mathcal{C}=(-\i,l^*)$\\
\cline{2-5}
& \multicolumn{3}{|l|}{$\delta=0$} & no early exercise; $\mathcal{C}=\mathbb{R}$\\
\cline{2-5}
& \multirow{5}{*}{$\delta<0$} & \multirow{2}{*}{$\mu^2+2\delta\sigma^2>0$} & $\mu<0$ & double continuation region; $\mathcal{C}=(-\i,l^*)\cup (u^*,\i)$\\
\cline{4-5}
& & & $\mu\geq 0$ & no early exercise; $\mathcal{C}=\mathbb{R}$\\
\cline{3-5}
& & \multirow{2}{*}{$\mu^2+2\delta\sigma^2=0$} & $\mu<0$ & double continuation region; $l^*=u^*$; $\mathcal{C}=(-\i,l^*)\cup (l^*,\i)$\\
\cline{4-5}
& & & $\mu\geq 0$ & no early exercise; $\mathcal{C}=\mathbb{R}$\\
\cline{3-5}
& & \multicolumn{2}{|l|}{$\mu^2+2\delta\sigma^2<0$} & no early exercise; $\mathcal{C}=\mathbb{R}$\\
\bottomrule
\end{tabular}
\end{table}

By numerical methods, we also find the optimal stopping regions for the Swing option in the Black-Scholes model.
In Figure \ref{fig:l&u} we present the sequence of the optimal levels $l_k^*$ and $u^*_k$ calculated
by Monte Carlo methods with $10.000$ iterations in each recursive step.

\begin{figure}[!ht]
\centering
\includegraphics[width=0.45\textwidth]{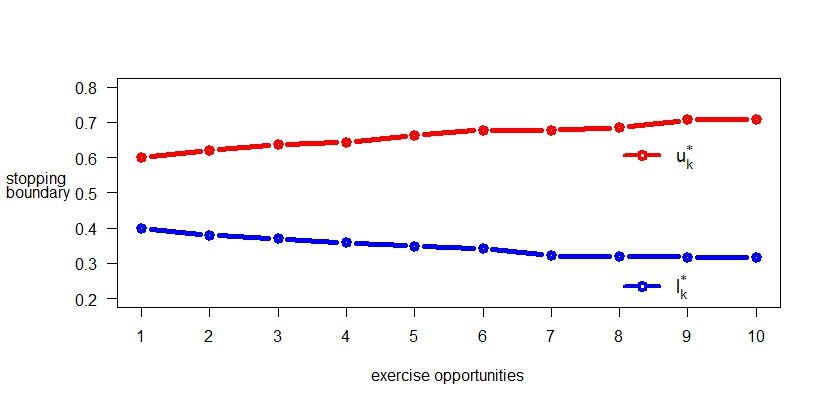}
\caption{\footnotesize{The optimal levels $l_k^*$ and $u_k^*$ for $k=1,...,10$ for a Swing American put option with negative discounting rate. Parameters: $q=-0.01,\ K=1.2,\ \sigma=0.2, \delta=-0.06$, $\delta_k=0.5$.}}
\label{fig:l&u}
\end{figure}

\subsection{Jump-diffusion model}
For comparison purpose we extend the numerical calculation of the Black-Scholes model given above by adding the possibility of downward jumps for the log prices.
We model this jumps by an exponential distribution with parameter $\rho$ and we assume that they occur in the market with fixed intensity $\lambda>0$.
Formally, we model the logarithm of the risky underlying asset price $X_t=\log S_t$ by
\begin{align}
X_t=x+\mu t+\sigma W_t-\sum\limits_{i=1}^{N_t}\xi_i,\label{Compound}
\end{align}
where,  $x=\log s$ and $\mu=q-\delta -\frac{\sigma^2}{2} +\frac{\lambda\rho}{\phi+\rho}-\lambda$,
$N_t$ is a Poisson process with intensity $\lambda>0$ and $\{\xi_i\}_{\{i=1\}}^\infty$ is an sequence of i.i.d. exponential jumps with parameter $\rho$.
It is well-known that jump-diffusion models better reproduces a wide variety of implied volatility skews and smiles, and market fluctuation (see the discussion and references in the introduction).

Note that the process $X_t$ given in \eqref{Compound} is a spectrally negative L\'evy process with
 Laplace exponent
\begin{align*}
\psi(\phi)=\mu\phi+\frac{\sigma^2}{2}\phi^2+\frac{\lambda\rho}{\phi+\rho}-\lambda.\label{comppsi}
\end{align*}
In order for the American put option stopping problem to be well-defined, following Lemma \ref{lem:condition}  we assume that
\begin{equation*}\label{stability}\E_{(x)}X_1= \psi'(0) = \mu-\frac{\lambda}{\rho}>0.
\end{equation*}

\noindent  The scale function is given by:
\begin{align}
W^{(q)}(x)=\frac{e^{\Phi(q)x}}{\psi^\prime(\Phi(q))}+\frac{e^{-\varphi_1x}}{\psi^\prime(-\varphi_1)}+\frac{e^{-\varphi_2x}}{\psi^\prime(-\varphi_2)},\nn
\end{align}
where (possibly complex) valued $\varphi_i$ for $i=1,2$ satisfy
\begin{align}
\psi(-\varphi_i)=q.\label{three}
\end{align}
One can observe that the above equation is equivalent to a cubic equation and it can be solved using Cardano roots.
This type of equation either have all three roots real or
one real root and two non-real complex conjugate roots.
The problem of the American put option with the negative rate is well-defined when $\Phi(q)$ exists.
This is equivalent to having three real solutions of \eqref{three}.
The roots of $\psi(\phi)=q$ for the negative discounting rate are presented in the Figure \ref{fig:psi}.
Note that $\Phi(q)$ has also  to satisfy the following inequality
\begin{align}
\Phi(q)+\rho>0.\label{addin}
\end{align}
Indeed,  as $\psi(\Phi(q))=q$, we can write
\begin{align}
e^q=e^{\psi(\Phi(q))X_1}=\E\Big[e^{\Phi(q)}\Big]=\E\Big[e^{\Phi(q)(\mu t+\sigma W_t)}\Big]\E\Big[e^{-\Phi(q)\sum\limits_{i=1}^{N_t}\xi_i}\Big]
=e^{\mu\Phi(q)+\frac{1}{2}\sigma^2\Phi^2(q)}e^{-\lambda(1-\int_0^\i e^{-\Phi(q)\xi}\rho e^{-\rho\xi}\diff\xi)}.\nn
\end{align}
Now, the integral in the second expectation is finite if and only if \eqref{addin} holds true.

\begin{figure}[!ht]
\center{
\subfloat{\includegraphics[width=0.45\textwidth]{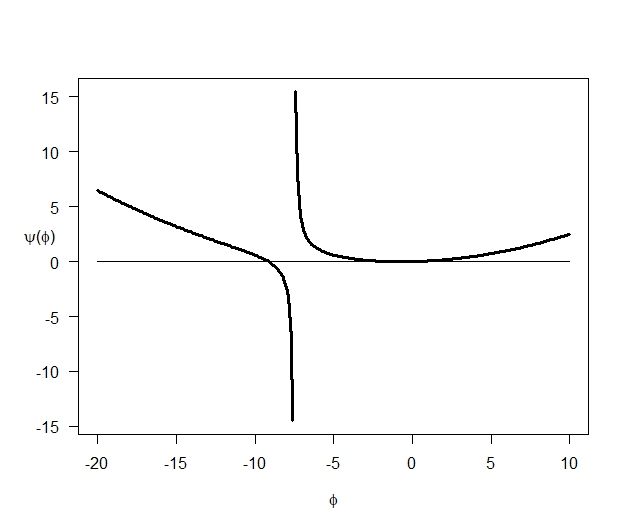}}
\quad
\subfloat{\includegraphics[width=0.45\textwidth]{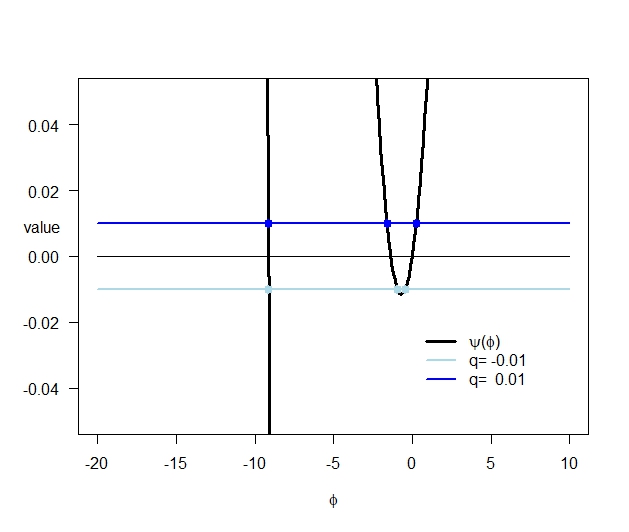}}
\caption{\footnotesize{The Laplace transform $\psi(\phi)$ (left) and solutions of $\psi(\phi)=q$ for different $q$ (right). Parameters: $\sigma=0.2, \lambda=0.2,\ \rho=7.5,\ \mu=0.06$.}}\label{fig:psi}}
\end{figure}

We can find  the value of the American put option using   \eqref{eq:v_value}. To do so, it remains only to
  to calculate the double integral appearing in the formula. As the jumps have exponential distribution, it  can be computed as a product of two integrals as follows:
\begin{align}
\int_0^\i e^{-\Phi(q)(l-u+y)\vee 0}\left(K-e^{l\vee (u-y)}e^{-\rho y}\diff y\right)=&\left(K-e^l\right)\frac{e^{-\rho (u-l)}}{\Phi(q)+\rho}+\frac{1}{\rho}K\left(1-e^{-\rho (u-l)}\right)\nn\\
&-\frac{1}{1+\rho}e^u\left(1-e^{-(1+\rho)(u-l)}\right)\nn
\end{align}
and
\begin{align}
&\int_0^\i \left(e^{-\Phi(q)z}W^{(q)}(x-u)-W^{(q)}(x-u-z)\right)\lambda\rho e^{-\rho z}\diff z =\frac{\lambda\rho}{\Phi(q)+\rho}W^{(q)}(x-u)\nn\\
&\qquad\qquad+\frac{\lambda\rho}{\Phi(q)+\rho}\frac{e^{-\rho(x-u)}-e^{\Phi(q)(x-u)}}{\psi^\prime (\Phi(q))}+\frac{\lambda\rho}{\rho-\varphi_1}\frac{e^{-\rho(x-u)}-e^{-\varphi_1(x-u)}}{\psi^\prime (-\varphi_1)}+\frac{\lambda\rho}{\rho-\varphi_2}\frac{e^{-\rho(x-u)}-e^{-\varphi_2(x-u)}}{\psi^\prime (-\varphi_2)}.\nn
\end{align}
The optimal boundaries can then be derived by using the smooth fit principle or first-order conditions.
The value functions of the American put option and the optimal stopping regions for some negative discounting factor (double continuation region) and for some positive discounting factor (single continuation region) are presented in Figure \ref{fig:ex_jumps}.
\begin{figure}[!ht]
\centering
\includegraphics[width=0.6\textwidth]{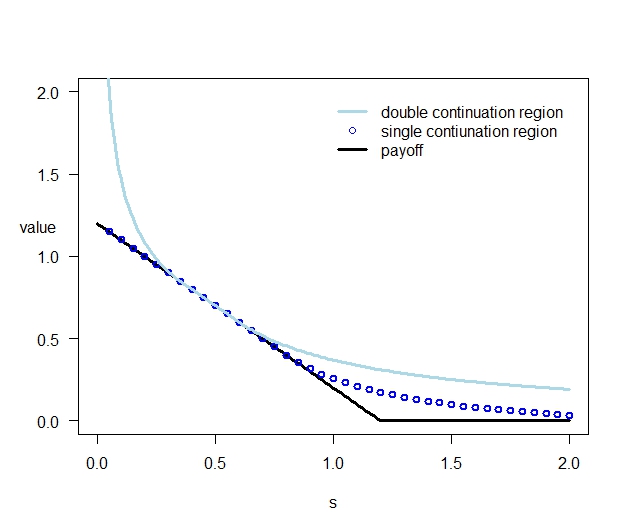}
\caption{\footnotesize{The American put option value for different discounting factor: $q=-0.01$ with double continuation region and $q=0.01$ with single continuation region. Parameters: $\sigma=0.2, \lambda=0.2,\ \rho=7.5,\ \mu=0.06$.}}
\label{fig:ex_jumps}
\end{figure}

To understand the impact of jumps on the value function and on the stopping region we compare various intensities of exponential distribution $\rho$ and various intensities of arrival rate $\lambda$, leaving fixed the other parameters, in Figure \ref{fig:ex_jumps2} and Figure \ref{fig:ex_jumps_lambda}, respectively.
Note that the increase in $\rho$ corresponds to a decrease in the average sizes of the jumps, which, we recall, are downward jumps.
On the other hand, the increase in $\lambda$ impacts in increase of average number of observed jumps in one time unit.

Note that the value function gets lower and the stopping region gets larger as the parameter of the exponential distribution increases. The same behaviour is observed when the arrival intensity decreases. As for  an American put option the jump reduces the asset price a higher or more frequent downward jumps increase the likelihood of a higher gain, hence the price of the option. As for the critical prices, we recall that when the interest rate is negative, it is not convenient to  exercise  the option  if it  is either too deep or not enough in the money.  If the option is in the money, a higher or more frequent downward jumps could drastically reduce  the asset price, making the exercise no longer convenient, therefore a safer (namely higher) lower level for the stopping region is required. The behaviour of the upper critical value, which for the put option is the "standard" one, agrees with the literature for the case $q\ge 0$. For instance 
\cite{Amin} shows that
jumps may reduce the value of the asset, early exercise is postponed and the option is exercised if the asset is deeper in the money.

\begin{figure}[!ht]
\centering
\includegraphics[width=0.6\textwidth]{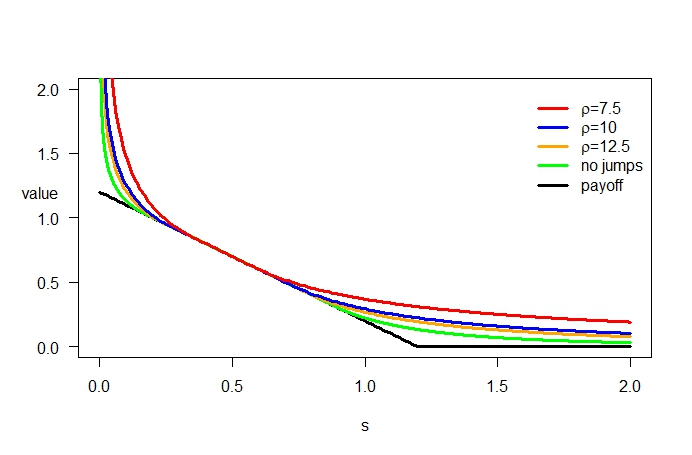}
\caption{\footnotesize{The American put option value for different intensities of exponential distribution. Parameters: $\sigma=0.2, \lambda=0.2,\ \mu=0.06$, $q=-0.01$.}}
\label{fig:ex_jumps2}
\end{figure}

\begin{figure}[!ht]
\centering
\includegraphics[width=0.6\textwidth]{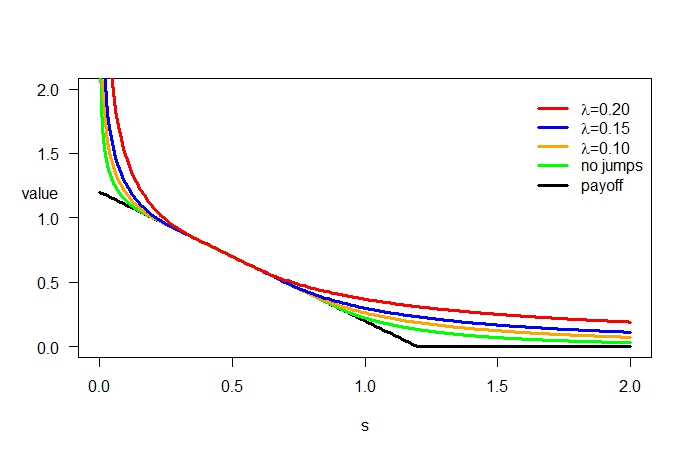}
\caption{\footnotesize{The American put option value for different intensities of exponential distribution. Parameters: $\sigma=0.2, \rho=7.5,\ \mu=0.06$, $q=-0.01$.}}
\label{fig:ex_jumps_lambda}
\end{figure}

Finally, we also compare the cases of spectrally negative and spectrally positive processes. When we set just opposite sign of the jumps comparing to spectrally negative case
and all other parameters constant, we observe larger stopping region and lower value function in case of positive jumps. This behaviour is consistent with previous analysis of jump's parameters $\rho$ and $\lambda$. When the jumps for spectrally negative process get smaller, i.e. the jump intensity $\rho$ increases, the stopping region expands. If we allow opposite sign of jump's size, we get a spectrally positive process and further dilation. This result can be observed in Figure \ref{fig:ex_jumps_positive}.

\begin{figure}[!ht]
\centering
\includegraphics[width=0.6\textwidth]{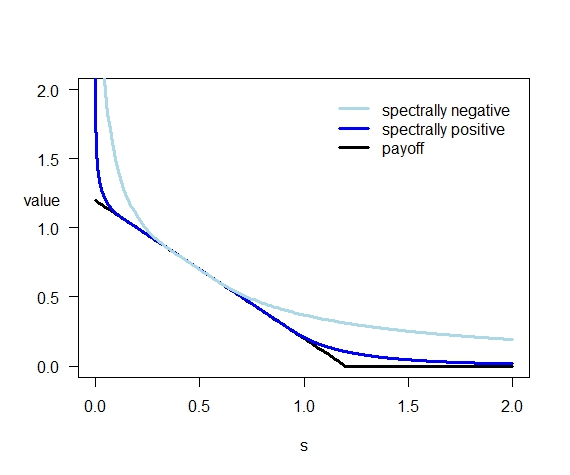}
\caption{\footnotesize{The American put option value for spectrally asymmetric processes. Parameters: $\sigma=0.2,\ \rho=7.5,\ \lambda=0.2,\ \mu=0.06$, $q=-0.01$.}}
\label{fig:ex_jumps_positive}
\end{figure}

\bigskip
\section{Conclusions and future work}\label{sec:conclusions}
In this paper we  have studied perpetual American options in a completely asymmetric L\'evy market, when  the discounting rate is negative.
We have explicitly calculated the price of the American options and identified the critical prices, hence the double continuation region, in terms of the scale function of an appropriate spectrally negative L\'evy process. We have also extended our approach to the analysis of Swing options.
We have also conducted an extensive numerical analysis for the Black-Scholes model and
the jump-diffusion model with exponential jumps. The main vehicle for deriving all results was based on
fluctuation theory of L\'evy processes and general theory of stopping problems.
This kind of study is still at its first step and several further extensions can be done.
First of all, the characterization of the price and critical prices of the perpetual American options is the starting point for the analysis of finite-maturity American options and Canadized options.
Moreover, one can think of more general settings involving Markov regime switching
to model the underlying price. These type of  results  could be achieved using the work of 
\cite{ivanovs}
where some exit identities related to the first passage times \eqref{exittimes} are given.
It would also be interesting to consider multi-dimensional exponential L\'evy process
(see 
\cite{Klimsiak}).
Finally,  the analysis  of a market in the presence of a negative discounting rate should address also other types of options, e.g. the $\pi$-options
described by 
\cite{pi}.
All of these problems are quite complex and left for future research.

\section*{Acknowledgments}
This paper was conceived while the first author was visiting the Hugo Steinhaus Center at Wroc\l aw University of Science and Technology. A special thank is due to prof. A. Weron and his group  for their kind hospitality.

\bibliographystyle{plain}

\end{document}